\newtheorem{theorem}{Theorem}[section]
\newtheorem{lemma}{Lemma}[section]
\newtheorem{example}{Example}[section]
\newtheorem{definition}{Definition}[section]
\newtheorem{remark}{Remark}[section]
\newtheorem{corollary}{Corollary}[section]
\begin{document}
\let\WriteBookmarks\relax
\def\floatpagepagefraction{1}
\def\textpagefraction{.001}
\shortauthors{Wenyu Han et~al.}
\shorttitle{}
\title [mode = title]{Extremal Self-Dual Codes and Linear Complementary Dual Codes from Double Circulant Codes}

\author{Wenyu Han}[orcid=0009-0006-1666-2844]
\ead{15692342169@163.com}
\author{Tongjiang Yan}[orcid=0000-0002-9647-503X]
\cormark[1]
\ead{yantoji@163.com}
\author{Ming Yan}[orcid=0009-0001-3275-8321]
\ead{yanming_a1@163.com}
\affiliation{
                addressline={China University of Petroleum (East China)}, 
                city={Qingdao, Shandong},
                postcode={266000}, 
                country={China}}
\cortext[cor1]{Corresponding author}

\begin{abstract}
This paper explores extremal self-dual double circulant (DC) codes and linear complementary dual (LCD) codes of arbitrary length over the Galois field $\mathbb F_2$. We establish the sufficient and necessary conditions for DC codes and bordered DC codes to be self-dual and identify the conditions for self-dual DC codes of length up to 44 to be extremal or non-extremal. Additionally, The self-duality and extremality between DC codes and bordered DC codes are also examined. Finally, sufficient conditions for bordered DC codes to be LCD codes over $\mathbb F_2$ under Euclidean inner product are presented.

\end{abstract}

\begin{keywords}
Double circulant codes \sep Extremal self-dual codes \sep Linear complementary dual codes
\end{keywords}

\maketitle
\section{Introduction}
 Self-dual codes are one of the most interesting classes of linear codes, include many well-known examples such as extended Hamming codes, extended Golay codes, and certain quadratic residue codes. Self-dual codes play a crucial role in the construction of quantum stabilizer codes \cite{ref1} and the determination of weight enumerators \cite{ref2}, making their study a highly significant area of research.

Double circulant (DC) codes, a special class of 1-generator quasi-cyclic codes with index 2, have been extensively studied by researchers since the 1960s \cite{ref3,ref4}. The generator matrices of DC codes are composed of an identity matrix and a circulant matrix, which endow them with favorable algebraic properties \cite{ref4,ref5}. The unique structure of their generator matrices enables the determination of a self-dual double circulant code once an orthogonal circulant matrix is identified. The construction and enumeration of orthogonal circulant matrices have been extensively studied \cite{ref6}\nocite{ref7}-\cite{ref8}. In the majority of research on double circulant codes, scholars have focused on constructing these codes from the perspective of generator matrices, leveraging knowledge of cyclotomic numbers, power residues, and sequences \cite{ref5,ref9,ref10,ref11}. Moreover, considering the importance of minimum distance and weight distribution, researchers have been dedicated to finding extremal self-dual double circulant codes and determining their weight distributions \cite{ref12}\nocite{ref13,ref14,ref15}-\cite{ref16}. However, in the work of predecessors, identifying extremal self-dual double circulant codes has not been a straightforward task. If the construction method of the generator matrix is fixed, the resulting double circulant code may not be extremal. Exhaustive search methods, while capable of finding extremal codes, require considerable computational power. Therefore, the primary objective of this paper is to identify methods that facilitate the determination of extremal self-dual DC codes of length up to 44 and bordered DC codes of length up to 20 in the polynomial form.

Linear complementary dual (LCD) codes are linear codes that intersect with their dual trivially which were introduced by J.L.Massey in 1992 \cite{ref17}. Massey showed that LCD codes provide an optimal linear coding scheme for a two-user binary adder channel.  In recent years, much work has been
done concerning the construction of LCD codes \cite{ref18}\nocite{ref19,ref20,ref21,ref22}-\cite{ref23}. Particularly, in \cite{ref23}, Guan et al. presented the sufficient and necessary conditions for one-generator quasi-cyclic
codes to be LCD codes involving Euclidean, Hermitian, and symplectic inner products. Inspired by their work, this paper establishes a connection between double circulant LCD codes and bordered double circulant LCD codes, and proposes sufficient conditions for bordered double circulant codes to be LCD over $\mathbb F_2$. 

The paper is organized as follows. Section 2 contains some definitions and preliminaries needed thereafter. Section 3 presents conditions for double circulant self-dual codes of length up to 44 to be extremal. Section 4 discusses some results regarding bordered double circulant codes. Section 5 studies the sufficient condition for bordered DC codes to be LCD codes. Section 6 concludes the paper. All computations have been done by MAGMA \cite{ref25}.

\section{Preliminaries}
\subsection{Linear codes}
An $[n,k]$ linear code $\mathcal C$  of length $n$ and dimension $k$ over the Galois field $\mathbb F_q$ is a linear subspace of $\mathbb F_q^n$, where $q$ is a prime power. Any \( k \) linearly independent vectors from this linear subspace form a \( k \times n \) matrix $G$ called the generator matrix of the linear code $\mathcal C$.  All linear combinations of the rows of the generator matrix generate all  codewords in $\mathcal C$. 

The linear code $\mathcal C$ can also be determined by an \((n-k) \times n\) parity check matrix \( H \),  where \( H \) satisfies  $HG^T = \boldsymbol 0$, with 
$\boldsymbol 0$ representing a zero matrix. The dual code $\mathcal C^\perp=\{c^{\prime} \in \mathbb F_q^n \mid cc^{\prime}=0, \forall c \in \mathcal C\}$ of $\mathcal C$ is an $[n, n-k]$ linear code with the generator matrix \( H \) and the parity check matrix \( G \). The code $\mathcal C$ is self-orthogonal provided $\mathcal C \subseteq \mathcal C^\perp$, self-dual provided $\mathcal C = \mathcal C^\perp$, and dual-containing provided $\mathcal C^\perp \subseteq \mathcal C$. 

\subsection{Cyclic codes}
A linear code $\mathcal C$ of length $n$ over $\mathbb F_{q}$ is called a cyclic code if $(c_0, c_1,\cdots ,c_{n-1}) \in \mathcal C $ implies $(c_{n-1}, c_0, \cdots ,c_{n-2}) \in \mathcal C $. The cyclic code $\mathcal C$ is an ideal of a quotient ring $\mathbb F_{q}[x]/(x^n-1)$, which is generated by a monic factor polynomial $g(x)$ of $x^n-1$. Hence codewords in $\mathcal C$ are often represented in the polynomial form, and $g(x)$ is called the generator polynomial of $\mathcal C$. The dual code of $\mathcal C$ is still a cyclic code. Let $h(x)=(x^n-1)/g(x)$, and define $g^\perp(x)=h(0)^{-1}h^*(x)$, where $h^*(x)$ is the reciprocal polynomial of $h(x)$. Then $\mathcal C^\perp$ is generated by $g^\perp(x)$ \cite{ref24}. 

\subsection{Double Circulant codes}
An $m\times m$ circulant matrix over $\mathbb F_q$ is defined as
$$
    \begin{pmatrix}
          a_0&a_1&\cdots &a_{m-1}\\
          a_{m-1}&a_0&\cdots &a_{m-2}\\
          \vdots&\vdots&\ddots&\vdots\\
          a_1&a_2&\cdots &a_0\\
    \end{pmatrix},
$$
where $a_i\in \mathbb F_q,i=0,1,\cdots,m-1$.
The pure double circulant code and the bordered double circulant code are linear codes with generator matrices of the form$$(I_m, A)$$and 
$$
    \begin{pmatrix}
          &  &  &\alpha&1&1&\cdots &1\\
          &  &  &-1& & & & \\
          &I_{m+1}&  &-1& &A& & \\
          &  &  &\vdots& & & & \\
          &  &  &-1& & & & \\
    \end{pmatrix}
$$
respectively, where $\alpha \in \mathbb F_q$, $I_m$ is the $m\times m$ identity matrix and $A$ is an $m\times m$ circulant matrix.

\subsection{Hamming distance and Hamming weight}
The Hamming distance $d(\textbf{\emph{x}}, \textbf{\emph{y}})$ between two codewords $\textbf{\emph{x,y}}\in \mathcal C$ is defined to be the number of coordinates in which $\textbf{\emph{x}}$ and $\textbf{\emph{y}}$ differ. The minimum distance of a code $\mathcal C$ is the smallest distance between distinct codewords and is important in determining the error-correcting capability of $\mathcal C$. The Hamming weight $\mathrm{wt}(\boldsymbol x)$ of a vector $\boldsymbol x\in \mathbb F_q^n$ is the number of nonzero coordinates in $\boldsymbol x$. When $\mathcal C$ is a linear code, the minimum distance $d$ is the same as the minimum weight of the nonzero codewords of $\mathcal C$.

Let $\mathcal C$ be an $[n,k,d]_q$ self-dual code. For the case $q=2$, 
\begin{flalign}\label{eq1}
&&
d(\mathcal C)\leq \left\{\begin{aligned}
4\lfloor {\frac {n}{24}} \rfloor +4,  n\not\equiv22 \, \mathrm{mod}\, 24,\\
4\lfloor {\frac {n}{24}} \rfloor +6,  n\equiv22 \, \mathrm{mod}\, 24.
\end{aligned}\right. 
&&
\end{flalign}
The self-dual code $\mathcal C$ is called extremal if the above equality holds \cite{ref2}.  

\section{Construction of self-dual double circulant codes}

Double circulant codes are a generalization of cyclic codes. Similarly to cyclic codes, double circulant codes can be represented by polynomials. Let $R:=\mathbb F_2[x]/(x^m-1)$. Define a map $\phi:\mathbb F_2^{2m} \to R^2$ by $$\phi(c_{0, 0}, c_{0, 1}, \cdots , c_{0, m-1}, c_{1, 0}, c_{1, 1}, \cdots , c_{1, m-1})=(c_0(x), c_1(x)),$$ where $(c_{0, 0}, c_{0, 1}, \cdots , c_{0, m-1}, c_{1, 0}, c_{1, 1}, \cdots , c_{1, m-1}) \in \mathbb F_2^{2m}$, and $c_0(x)=c_{0, 0}+c_{0, 1}x+\cdots +c_{0, m-1}x^{m-1}, c_1(x)=c_{1, 0}+c_{1, 1}x+\cdots +c_{1, m-1}x^{m-1}$ are polynomials in $R$. It is evident that $\phi$ is a one-to-one correspondence between $\mathbb F_2^{2m}$ and the 2-dimension linear vector space over $R$. Let $\mathcal C$ be a double circulant code of length $n=2m$, then $\mathcal C$ is a subspace of $\mathbb F_2^{2m}$. Hence the generator matrix $G$ of $\mathcal C$ can be expressed by the polynomial matrix as follows:
$$\begin{pmatrix}
          1&f(x)\\
          x&xf(x)\\
          \vdots&\vdots\\
          x^{m-1}&x^{m-1}f(x)
    \end{pmatrix},$$
where the coefficients of \(f(x)\) correspond to the first row of the circulant matrix in the generator matrix of $\mathcal C$. The remaining entries are expressed similarly. 
We know that each codeword in $\mathcal C$ is actually a linear combination of the rows in the generator matrix $G$. In this case we can write $$\mathcal C:=R(1,f(x))=\{(r(x),r(x)f(x))|r(x) \in R\},$$where $r(x)f(x)$ is calculated in $R$. We refer to 
$(1,f(x))$ as the generator of $\mathcal C$. 

\begin{lemma}\cite{ref24}\label{lemma1}
    If $G=[I_k, A]$ is a generator matrix for the $[n, k]_2$ code $\mathcal C$ in standard form, then $H=[-A^T, I_{n-k}]$ is a parity check matrix for $\mathcal C$. 
\end{lemma}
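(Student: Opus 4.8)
The plan is to verify the two properties that together characterize a parity check matrix: that every codeword of $\mathcal C$ is annihilated by $H$, and that $H$ has exactly the right rank so that its null space is no larger than $\mathcal C$ itself. Showing only the first property would not suffice, so the argument naturally splits into an orthogonality computation followed by a dimension count.

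First I would record the block shapes. Since $G=[I_k, A]$ generates an $[n,k]_2$ code, the block $A$ is $k\times(n-k)$, so $A^T$ is $(n-k)\times k$, and $H=[-A^T, I_{n-k}]$ is a genuine $(n-k)\times n$ matrix. The central computation is then the orthogonality relation $HG^T=\boldsymbol 0$. Writing $G^T=\bigl(\begin{smallmatrix} I_k \\ A^T\end{smallmatrix}\bigr)$ and multiplying in block form gives $HG^T = (-A^T)I_k + I_{n-k}A^T = -A^T + A^T = \boldsymbol 0$; over $\mathbb F_2$ the sign is immaterial, but the identity holds over any field. This shows that each row of $H$ is orthogonal to each row of $G$, hence $\mathcal C$, the row space of $G$, is contained in the null space of $H$.

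It then remains to match dimensions. Because $H$ contains the block $I_{n-k}$, its $n-k$ rows are linearly independent, so $\operatorname{rank} H = n-k$ and $\dim \ker H = n-(n-k) = k$. Since $\mathcal C$ has dimension $k$ and satisfies $\mathcal C \subseteq \ker H$, the two spaces coincide, i.e. $\mathcal C = \{\boldsymbol x \in \mathbb F_2^n : H\boldsymbol x^T = \boldsymbol 0\}$. By the definition of a parity check matrix recalled in Section 2.1, this establishes that $H$ is a parity check matrix for $\mathcal C$.

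The step I expect to require the most care is not the block arithmetic but the logical point that $HG^T=\boldsymbol 0$ by itself is insufficient: without a rank bound, the null space of $H$ could strictly contain $\mathcal C$. The presence of the identity block $I_{n-k}$ is precisely what makes the rank argument immediate, so the only genuine subtlety is to remember to invoke it, which is exactly why phrasing the hypothesis in standard form is convenient.
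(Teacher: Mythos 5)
Your proof is correct and complete: the block computation $HG^T=-A^T+A^T=\boldsymbol 0$ together with the rank count $\operatorname{rank}H=n-k$ is exactly the standard argument. The paper itself states this lemma with a citation to its reference [24] and gives no proof, so there is nothing to compare against; your write-up (including the observation that the orthogonality relation alone would not suffice without the dimension count) is the textbook proof one would expect.
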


\begin{lemma}\label{lemma99}
    Let $\mathcal C$ be a double circulant code with the generator $(1, f(x))$ over $\mathbb F_2$. $\mathcal C$ is self-dual if and only if $f(x)\overline{f(x)}=1\,  (\mathrm{mod}\,  x^m-1)$, where $\overline{f(x)}=x^mf(\dfrac{1}{x})\, (\mathrm{mod}\, x^m-1)$.
\end{lemma}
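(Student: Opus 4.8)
The plan is to reduce self-duality to a self-orthogonality condition and then translate that condition into the polynomial language already set up for $\mathcal C$. Since $\mathcal C = R(1,f(x))$ has a generator matrix of the form $G = [I_m \mid A]$, with $A$ the $m\times m$ circulant matrix whose first row is the coefficient vector of $f(x)$, its dimension is exactly $m = n/2$. Hence $\dim \mathcal C = \dim \mathcal C^{\perp}$, and $\mathcal C$ is self-dual if and only if it is self-orthogonal, i.e. $\mathcal C \subseteq \mathcal C^{\perp}$. This removes the dimension count from the argument and lets me work entirely with inner products of codewords.

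First I would record the inner-product identity in $R$: for $a(x),b(x)\in R$ with coefficient vectors $\boldsymbol a,\boldsymbol b$, the Euclidean inner product satisfies $\boldsymbol a\cdot\boldsymbol b = [x^0]\big(a(x)\overline{b(x)}\bmod (x^m-1)\big)$, where $[x^0]$ denotes the constant coefficient. This follows by matching the exponent $i+(m-j)\equiv 0 \pmod m$, which forces $i=j$ and recovers $\sum_i a_i b_i$. I would also note that the bar map is the ring homomorphism induced by $x\mapsto x^{m-1}$, so $\overline{ab}=\overline a\,\overline b$ in $R$; equivalently, on the matrix side, passing to $\overline{f(x)}$ is exactly transposing the circulant matrix $A$.

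With these tools I would compute the inner product of two arbitrary codewords $(r,rf)$ and $(s,sf)$: it equals $[x^0]\big(r\overline{s}+(rf)\overline{(sf)}\big)=[x^0]\big(r\,\overline{s}\,(1+f\overline{f})\big)$. If $f\overline{f}=1$ in $R$, the bracket vanishes identically, so $\mathcal C$ is self-orthogonal and hence self-dual. Conversely, self-orthogonality forces $[x^0]\big(r\,\overline{s}\,(1+f\overline{f})\big)=0$ for all $r,s\in R$; choosing $s=1$ and $r=x^{m-k}$ picks out the $k$-th coefficient of $g:=1+f\overline{f}$, so every coefficient of $g$ vanishes and $f\overline{f}=1$. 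Equivalently, one may phrase everything through $GG^T=I_m+AA^T$ over $\mathbb F_2$, which is zero precisely when $AA^T=I_m$, i.e. when $f\overline{f}=1$.

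The only genuinely content-bearing step, and the one I would be most careful about, is the identification of the bar map with transposition of the circulant matrix (equivalently, the inner-product identity above). Everything there is index bookkeeping modulo $m$, but the off-by-one in the exponent $m-j$ and the reduction $x^m\equiv 1$ are easy to mishandle, and getting the direction of the reciprocal right is what makes the clean factorization $1+f\overline{f}$ appear. Once that identity is pinned down, both directions of the equivalence are immediate.
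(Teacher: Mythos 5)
Your proof is correct, and it takes a genuinely different (though closely related) route from the paper's. The paper first computes the parity check matrix $H=(A^T,I_m)$ via Lemma \ref{lemma1}, observes that $A^T$ is the circulant matrix of $\overline{f(x)}$, and thereby identifies $\mathcal C^{\perp}$ as the double circulant code $R(\overline{f(x)},1)$; self-orthogonality then reduces to the single membership condition $(1,f(x))=k(x)(\overline{f(x)},1)$, which forces $k(x)=f(x)$ and $f(x)\overline{f(x)}=1$. You instead bypass the dual code entirely and work at the level of the bilinear form, via the identity $\boldsymbol a\cdot\boldsymbol b=[x^0]\bigl(a(x)\overline{b(x)}\bigr)$ and the multiplicativity of the bar map, so that orthogonality of all pairs of codewords collapses to the vanishing of $1+f(x)\overline{f(x)}$ in $R$. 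Both arguments rest on the same core bookkeeping fact --- transposing the circulant matrix realizes the bar map --- and both invoke the dimension count $\dim\mathcal C=\dim\mathcal C^{\perp}=m$ to upgrade self-orthogonality to self-duality. What your version buys is a more explicit converse direction (the paper dispatches it with ``the reverse is also true,'' whereas you exhibit the coefficient extraction $r=x^{m-k}$, $s=1$) and a formulation that generalizes readily to other inner products and to quasi-cyclic codes of higher index; what the paper's version buys is the explicit generator $(\overline{f(x)},1)$ of $\mathcal C^{\perp}$, which it reuses later (e.g.\ in the proof of Theorem \ref{theorem3. 3}).
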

\begin{proof}
    Let $f(x)=f_0+f_1x+\cdots +f_{m-1}x^{m-1}$, $f_i\in \mathbb F_2,\ i=0,1,\cdots,m-1$. Then the generator matrix of $\mathcal C$ is
    $$
        \begin{pmatrix}
        1&0&\cdots&0&f_0&f_1&\cdots&f_{m-1}\\
        0&1&\cdots&0&f_{m-1}&f_0&\cdots&f_{m-2}\\
        \vdots&\vdots&\ddots&\vdots&\vdots&\vdots&\ddots&\vdots\\
        0&0&\cdots&1&f_1&f_2&\cdots&f_0
        \end{pmatrix}.
    $$
    According to Lemma \ref{lemma1}, the parity check matrix of $\mathcal C$ is 
    $$
        \begin{pmatrix}
        f_0&f_{m-1}&\cdots&f_1&1&0&\cdots&0\\
        f_{1}&f_0&\cdots&f_{2}&0&1&\cdots&0\\
        \vdots&\vdots&\ddots&\vdots&\vdots&\vdots&\ddots&\vdots\\
        f_{m-1}&f_{m-2}&\cdots&f_0&0&0&\cdots&1
        \end{pmatrix}.
    $$
    Thus the generator of $\mathcal C^\perp$ is $(\overline{f(x)}, 1)$, where $\overline{f(x)}=f_0+f_{m-1}x+\cdots +f_1x^{m-1}=x^mf(\dfrac{1}{x})\, (\mathrm{mod}\, x^m-1)$. If $\mathcal C$ is self-orthogonal, then $(1, f(x))$ is a codeword in $\mathcal C^\perp$, so
    $$
        (1, f(x))=k(x)(\overline{f(x)}, 1)=(k(x)\overline{f(x)}, k(x)),
    $$
    where $k(x)\in R$. Then $k(x)=f(x)\, (\mathrm{mod}\, x^m-1)$ and $f(x)\overline{f(x)}=1\,  (\mathrm{mod}\,  x^m-1)$. Since the dimensions of $\mathcal C$ and $\mathcal C^\perp$ are identical, $\mathcal C$ is not only self-orthogonal, but also self-dual. Therefore, if $\mathcal C$ is self-dual, we have $f(x)\overline{f(x)}=1\,  (\mathrm{mod}\,  x^m-1)$. The reverse is also true. 
\end{proof}
\begin{remark}
     In \cite{ref6}, MacWilliams presented that $AA^T=I_m$ is implied by the corresponding polynomial equation $a(x)a(x)^T=1\,\mathrm{mod}\,(x^m-1)$, where $A$ is an $m\times m$ orthogonal circulant matrix with its first row representing the coefficients of $a(x)$, and $a(x)^T$ denotes $a(x^{m-1})\,\mathrm{mod}\,(x^m-1)$. This statement is equivalent to Lemma \ref{lemma99}. Exactly, we here rewrite and give a full proof.
\end{remark}
\begin{theorem}\label{theorem2}
    Let $f(x)\in R$. If $f(x)\overline{f(x)}=1\,  (\mathrm{mod}\,  x^m-1)$, then $f^*(x)\overline{f^*(x)}=1\,  (\mathrm{mod}\,  x^m-1)$, and $x^if(x)\overline{x^if(x)}=1\,  (\mathrm{mod}\,  x^m-1)$, where $f^*(x)$ is the reciprocal polynomial of $f(x)$ and $i$ denotes a natural number. 
\end{theorem}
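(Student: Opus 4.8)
The plan is to recognize the bar operation as a well-behaved algebraic map on $R$ rather than to manipulate coefficient vectors directly. Concretely, I would first observe that $\overline{f(x)} = f(x^{-1})$ in $R$, where $x^{-1} = x^{m-1}$, so that the bar is exactly the substitution $x \mapsto x^{-1}$. Since $\gcd(m-1,m)=1$, this substitution induces a ring automorphism of $R = \mathbb{F}_2[x]/(x^m-1)$; moreover it is an involution because $(m-1)^2 \equiv 1 \pmod m$. The two facts I actually need are multiplicativity, $\overline{g\,h} = \overline{g}\,\overline{h}$, and the value $\overline{x} = x^{m-1} = x^{-1}$ (hence $\overline{x^i} = x^{-i}$). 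This is the same map appearing in Lemma \ref{lemma99}, now viewed structurally.

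With these in hand the second assertion is immediate: by multiplicativity $\overline{x^i f} = \overline{x^i}\,\overline{f} = x^{-i}\overline{f}$, so $x^i f \cdot \overline{x^i f} = x^i f \cdot x^{-i}\overline{f} = f\overline{f} = 1$ in $R$, using the hypothesis in the last step.

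For the first assertion the one genuine point is to connect the reciprocal polynomial $f^*$, which depends on $\deg f$, to the intrinsic operation $\overline{\,\cdot\,}$. Writing $d = \deg f$ for the degree of the canonical representative of $f$ (of degree $< m$), a short computation with $f^*(x) = x^d f(1/x)$ and $\overline{f(x)} = x^m f(1/x) \bmod (x^m-1)$ gives the identity $\overline{f} = x^{m-d} f^*$, equivalently $f^* = x^{d}\overline{f}$ in $R$. Applying the bar to this and using multiplicativity and the involution property yields $\overline{f^*} = x^{-d} f$. Therefore $f^*\overline{f^*} = (x^{d}\overline{f})(x^{-d} f) = \overline{f}\,f = 1$.

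The main obstacle I anticipate is purely bookkeeping: keeping the degree-dependent factor $x^{m-d}$ straight when passing between $f^*$ and $\overline{f}$, and checking that the relation $f^* = x^d\overline{f}$ remains valid in edge cases (for instance when the constant term $f_0$ vanishes, so that $\deg f^* < d$). It is worth noting that the hypothesis $f\overline{f}=1$ forces $f$ to be a unit of $R$, which guarantees that the manipulations above take place among units and that $f^*$ is again a unit; but the identities themselves are formal consequences of the bar being a multiplicative involution together with $\overline{x}=x^{-1}$, so no further structural input is required.
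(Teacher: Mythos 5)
Your proof is correct and follows essentially the same route as the paper: both arguments reduce to the identities $\overline{x^i f}=x^{-i}\overline{f}$ and $f^*=x^{\deg f}\,\overline{f}$ in $R$, after which multiplying and cancelling the powers of $x$ brings everything back to $f\overline{f}=1$. Your framing of the bar as a multiplicative involution ($x\mapsto x^{-1}$) is a cleaner way to organize the same computation the paper carries out directly with $x^m f(1/x)$, and it correctly disposes of the edge case $f_0=0$ since the bar operation does not depend on $\deg f$.
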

\begin{proof}
    The reciprocal polynomial of $f(x)$ is $f^*(x)=x^{deg(f(x))} f(\dfrac{1}{x})$, then
\begin{flalign*}
    &&
\begin{aligned}
    f^*(x)\overline{f^*(x)}&=x^{deg(f(x))} f(\frac{1}{x})x^mx^{-deg(f(x))}f(x)=f(x)x^mf(\frac{1}{x})=1\,  (\mathrm{mod}\,  x^m-1),\\
    x^if(x)\overline{x^if(x)}&=x^if(x)x^mx^{-i}f(\frac{1}{x})=f(x)x^mf(\frac{1}{x})=1\,  (\mathrm{mod}\,  x^m-1).
    \end{aligned}
    &&
\end{flalign*}
    
\end{proof}

Theorem \ref{theorem2} states that if the double circulant code generated by $(1,f(x))$ is self-dual, then the double circulant code generated by $(1,f^*(x))$ or $(1,x^if(x))$ is also self-dual. In fact, the double circulant codes generated by $(1,f(x))$, $(1,f^*(x))$, and $(1,x^if(x))$ are permutation equivalent.

\begin{theorem}
    The double circulant code $\mathcal C_1$ generated by $(1,f(x))$ over $\mathbb F_2$ is equivalent to the double circulant codes $\mathcal C_2$ and $\mathcal C_3$, which are generated by $(1,f^*(x))$ and $(1,x^if(x))$ respectively.
\end{theorem}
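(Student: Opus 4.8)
The plan is to exhibit, in each case, an explicit coordinate permutation of $\mathbb F_2^{2m}$ carrying $\mathcal C_1$ bijectively onto the target code; since a permutation of coordinates is an $\mathbb F_2$-linear, weight-preserving map, this establishes equivalence. I would begin with the easier case $\mathcal C_3 = R(1, x^i f(x))$. The key observation is that multiplication by $x^i$ in $R = \mathbb F_2[x]/(x^m-1)$ acts on coefficient vectors as an $i$-fold cyclic shift, hence as a permutation $\tau_i$ of the $m$ coordinates. Applying the identity to the first block and $\tau_i$ to the second block defines a permutation $\pi_3$ of the $2m$ coordinates, and under $\pi_3$ a codeword $(r(x), r(x)f(x))$ of $\mathcal C_1$ is sent to $(r(x), x^i r(x)f(x)) = (r(x), r(x)\,x^if(x))$, which is precisely the codeword of $\mathcal C_3$ attached to the same $r(x)$. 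As $r(x)$ ranges over $R$ this is a bijection $\mathcal C_1 \to \mathcal C_3$, so $\mathcal C_1$ and $\mathcal C_3$ are equivalent.

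For $\mathcal C_2 = R(1, f^*(x))$ I would proceed in two steps, bridging through the code $R(1, \overline{f(x)})$. First, since $f^*(x) = x^{\deg f} f(1/x)$ and $\overline{f(x)} = x^m f(1/x)\pmod{x^m-1}$, one has $\overline{f(x)} = x^{m-\deg f} f^*(x)\pmod{x^m-1}$; thus $\overline{f(x)}$ is a cyclic shift of $f^*(x)$, and by the argument just given $R(1,\overline{f(x)})$ is equivalent to $\mathcal C_2$. It therefore suffices to show $\mathcal C_1 \sim R(1,\overline{f(x)})$. To this end I would use the reversal permutation $\sigma$ on a single block, namely the involution fixing coordinate $0$ and sending coordinate $j$ to coordinate $m-j$ for $1\le j\le m-1$; a direct computation shows that $\sigma$ applied to the coefficient vector of $p(x)$ yields the coefficient vector of $\overline{p(x)}$. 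Setting $\pi_2 = \sigma \oplus \sigma$ (the same reversal on each block) and applying it to $(r(x), r(x)f(x))$ produces $(\overline{r(x)}, \overline{r(x)f(x)})$.

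The crux of the second case — the step I would treat most carefully — is the multiplicativity $\overline{r(x)f(x)} = \overline{r(x)}\cdot\overline{f(x)}$ in $R$. This becomes transparent once one notes that, because $x^m \equiv 1$ in $R$, the bar operation coincides with the substitution $x \mapsto x^{-1}$: indeed $\overline{p(x)} = x^m p(1/x) \equiv p(x^{-1})\pmod{x^m-1}$, which is exactly the ring automorphism of $R$ induced by $x \mapsto x^{m-1}$. Being an automorphism, it is both multiplicative and bijective. Consequently $\pi_2$ sends $(r, rf)$ to $(\overline r,\overline r\cdot\overline f)$, and as $r$ ranges over $R$ so does $s := \overline r$; hence the image of $\mathcal C_1$ under $\pi_2$ is $\{(s, s\,\overline f) : s \in R\} = R(1,\overline{f(x)})$. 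Chaining the two steps gives $\mathcal C_1 \sim R(1,\overline f) \sim \mathcal C_2$, as desired. Beyond this, the only routine checks are that $\tau_i$ and $\sigma$ are genuine coordinate permutations and the harmless assumption $f(x)\neq 0$ ensuring that $f^*(x)$ is defined.
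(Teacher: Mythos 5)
Your proof is correct, and for the $\mathcal C_2$ case it takes a genuinely different route from the paper. The paper argues at the level of generator matrices: it writes out $G_1$ and $G_2$ explicitly, reorders the rows of $G_1$ by the reversal matrix $D_m$, and then exhibits a column permutation $P_2 = D_m \oplus D_d \oplus D_{m-d}$ (with $d=\deg f$) satisfying $D_mG_1P_2=G_2$, which requires some careful index bookkeeping to split the second block into two reversed pieces. You instead work at the codeword level: you observe that the bar map is the ring automorphism of $R$ induced by $x\mapsto x^{-1}$, so the block-wise reversal $\sigma\oplus\sigma$ carries $(r,rf)$ to $(\overline r,\overline r\,\overline f)$ and hence $\mathcal C_1$ onto $R(1,\overline f)$, and then you reduce $R(1,\overline f)\sim R(1,f^*)$ to the already-settled cyclic-shift case via $\overline{f(x)}=x^{\,m-\deg f}f^*(x)$. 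Both proofs ultimately produce a coordinate permutation (and your composite permutation is essentially the paper's $P_2$ in disguise), but your factorization through $R(1,\overline f)$ replaces the paper's matrix-entry verification with a one-line multiplicativity argument, which is cleaner and generalizes immediately to other base fields; the paper's version has the minor virtue of displaying the permutation matrices explicitly. Your treatment of $\mathcal C_3$ coincides with the paper's. The only cosmetic caveat is that $\sigma\oplus\sigma$ followed by the shift is a single permutation realizing the equivalence $\mathcal C_1\sim\mathcal C_2$ directly, so the two-step phrasing costs nothing in rigor.
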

\begin{proof}
    Let $G_1$, $G_2$ and $G_3$ denote the generator matrices of $\mathcal C_1$, $\mathcal C_2$ and $\mathcal C_3$ respectively. Define $B_m$ as an $m\times m$ circulant matrix of the following form:$$B_m=\begin{pmatrix}
        0&1&0&\cdots&0\\
        0&0&1&\cdots&0\\
        \vdots&\vdots&\vdots&\ddots&\vdots\\
        0&0&0&\cdots&1\\
        1&0&0&\cdots&0\\
    \end{pmatrix}_{m\times m}$$
    Then the permutation matrix that maps $G_1$ to $G_3$ is given by
    $$ P_1=\begin{pmatrix}
        I_m&\boldsymbol{0}\\
        \boldsymbol{0}&B_m^{i}
    \end{pmatrix},$$
    which is easily verifiable. Let $d\ (d\le m-1)$ represent the degree of $f(x)$. Then 
\setcounter{MaxMatrixCols}{20}
\begin{align*}
    G_1 &= \begin{pmatrix}
     & & & &f_0&f_1&\cdots&f_{d-1}&f_d&\raisebox{-0.4ex}{0}&\cdots&\raisebox{-0.4ex}{0}\\
     & & & &\raisebox{-0.4ex}{0}&f_0&\cdots&f_{d-2}&f_{d-1}&f_d&\cdots&\raisebox{-0.4ex}{0}\\
     & &I_m& & & & &\cdots& & & & \\
     & & & &f_2&f_3&\cdots&0&0&0&\cdots&f_1\\
     & & & &f_1&f_2&\cdots&f_d&0&0&\cdots&f_0\\
    \end{pmatrix}, \\
    G_2 &= \begin{pmatrix}
     & & & &f_d&f_{d-1}&\cdots&f_1&f_0&0&\cdots&0\\
     & & & &\raisebox{-0.4ex}{0}&f_d&\cdots&f_2&f_1&f_0&\cdots&0\\
     & &I_m& & & & &\cdots& & & & \\
     & & & &f_{d-2}&f_{d-3}&\cdots&0&0&0&\cdots&f_{d-1}\\
     & & & &f_{d-1}&f_{d-2}&\cdots&f_0&0&0&\cdots&f_d\\
    \end{pmatrix}.
\end{align*}
Let $D_i$ be the $i\times i$ elementary transformation matrix in the following form:
    $$D_i=\begin{pmatrix}
        & & & &1\\
         & & &1& \\
         & &\begin{sideways}$\ddots$\end{sideways}\qquad& & \\
         &1& & & \\
        1& & & & \\
    \end{pmatrix}_{i\times i}.$$
    Then \setcounter{MaxMatrixCols}{20}
    $$D_mG_1=\begin{pmatrix}
     & & & &f_1&f_2&\cdots&f_d&0&0&\cdots&f_0\\
     & & & &f_2&f_3&\cdots&0&0&0&\cdots&f_1\\
     & &D_m& & & & &\cdots& & & & \\
     & & & &0&f_0&\cdots&f_{d-2}&f_{d-1}&f_d&\cdots&0\\
     & & & &f_0&f_1&\cdots&f_{d-1}&f_d&0&\cdots&0\\
\end{pmatrix}.$$
For the $m\times m$ matrix on the right-hand side of \( D_mG_1 \), the corresponding circulant matrix in \( G_2 \) can be obtained by swapping the \( i \)-th (\( 1 \leq i \leq d \)) column with the \( (d-i+1) \)-th column and the \( j \)-th (\( d+1 \leq j \leq m \)) column with the \( (m+d-j+1) \)-th column. Let $P_2$ be the $2m\times 2m$ elementary transformation matrix in the following form:
    $$P_2=\begin{pmatrix}
       D_m& & \\
        &D_d& \\
         &  &D_{m-d}\\
\end{pmatrix}_{2m\times 2m}.$$
Then $D_mG_1P_2=G_2$.
\end{proof}

If $\mathcal C$ is a self-dual code with the generator $(1, f(x))$ 
 and the generator matrix $G=(I_m, A)$, where $A$ is an $m\times m$ circulant matrix with the first row $f_0, f_1, \cdots, f_{m-1}$, then $GG^T=0$. That is,  $(I_m, A)(I_m, A)^T=0$. From this, we obtain 
 $I_mI_m^T+AA^T=0$, so $AA^T=I_m$. Therefore $f_0^2+f_1^2+\cdots+f_{m-1}^2=1$. Hence if  $\mathcal C$ is self-dual with the generator $(1, f(x))$, then $f(x)$ has odd weight. Notice the weight of $f(x)$ refers to the number of nonzero coefficients. 

\begin{lemma}\cite{ref24}\label{lemma2}
    Let $\mathcal C$ be a linear code with the parity check matrix $H$. If $\boldsymbol{c}\in \mathcal C$, the columns of $H$ corresponding to the nonzero coordinates of $\boldsymbol{c}$ are linearly dependent. Conversely, if a linear dependence relation with nonzero coefficients exists among $\omega$ columns of $H$, then there is a codeword in $\mathcal C$ of weight $\omega$ whose nonzero coordinates correspond to these columns. 
\end{lemma}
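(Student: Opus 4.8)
The plan is to use the single defining property of a parity check matrix, namely that $\boldsymbol{c}\in\mathcal C$ if and only if $H\boldsymbol{c}^T=\boldsymbol 0$, and to reinterpret the matrix--vector product $H\boldsymbol{c}^T$ as a linear combination of the columns of $H$. Writing $H=(\boldsymbol{h}_1,\boldsymbol{h}_2,\ldots,\boldsymbol{h}_n)$ in terms of its columns and $\boldsymbol{c}=(c_1,\ldots,c_n)$, the product expands as $H\boldsymbol{c}^T=\sum_{i=1}^n c_i\boldsymbol{h}_i$. Both directions of the lemma then follow from examining which terms in this sum are nonzero.

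For the forward direction, I would take $\boldsymbol{c}\in\mathcal C$, so that $\sum_{i=1}^n c_i\boldsymbol{h}_i=\boldsymbol 0$. Since $c_i=0$ whenever $i$ is not a nonzero coordinate of $\boldsymbol{c}$, I can discard those terms and keep only the columns indexed by the support of $\boldsymbol{c}$, obtaining $\sum_{i:\,c_i\neq 0} c_i\boldsymbol{h}_i=\boldsymbol 0$ with every retained coefficient $c_i$ nonzero. This is precisely a linear dependence relation among the columns corresponding to the nonzero coordinates of $\boldsymbol{c}$, so those columns are linearly dependent.

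For the converse, I would start from a linear dependence relation $\sum_{j=1}^\omega \lambda_j\boldsymbol{h}_{i_j}=\boldsymbol 0$ with all $\lambda_j\neq 0$ among $\omega$ distinct columns indexed by $i_1,\ldots,i_\omega$, and construct the vector $\boldsymbol{c}$ whose $i_j$-th entry is $\lambda_j$ and whose remaining entries are zero. Then $H\boldsymbol{c}^T=\sum_{j=1}^\omega \lambda_j\boldsymbol{h}_{i_j}=\boldsymbol 0$, so $\boldsymbol{c}\in\mathcal C$; and because each $\lambda_j$ is nonzero, the support of $\boldsymbol{c}$ is exactly $\{i_1,\ldots,i_\omega\}$, yielding a codeword of weight $\omega$ whose nonzero positions are the chosen columns.

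There is no genuine obstacle here: the whole argument is the column-wise reading of the identity $H\boldsymbol{c}^T=\boldsymbol 0$. The only point demanding slight care is the insistence that the coefficients in the dependence relation be nonzero in the converse, since this is exactly what forces the constructed codeword to have weight equal to $\omega$ rather than something strictly smaller.
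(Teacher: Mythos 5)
Your proof is correct and is the standard argument for this fact; the paper itself gives no proof, citing it directly from the reference, and the canonical proof there is exactly your column-wise reading of $H\boldsymbol{c}^T=\boldsymbol 0$. Your remark about requiring nonzero coefficients in the converse is the right point of care, since it guarantees the constructed codeword has weight exactly $\omega$.
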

Since a self-dual code over $\mathbb F_2$ contains only codewords with even weight, and according to Eq.(\ref{eq1}), the minimum distance for such a code to be extremal is 4 when the code length is less than or equal to 20. Therefore, to obtain an extremal double circulant self-dual code of length up to 20, we only need to demonstrate that there is no codeword of weight 2 and that there exists a codeword of weight 4. Equivalently, we need to prove that its parity check matrix has a set of 4 linearly dependent columns but no set of 2 linearly dependent columns, as stated in Lemma \ref{lemma2}. 

\begin{lemma}\cite{ref24}\label{lemma15}
    The following hold:
    \\(1) If $\boldsymbol{x},\boldsymbol{y}\in F_2^n$, then $$\mathrm{wt}(\boldsymbol{x}+\boldsymbol{y})=\mathrm{wt}(\boldsymbol{x})+\mathrm{wt}(\boldsymbol{y})-2\mathrm{wt}(\boldsymbol{x}\cap \boldsymbol{y}),$$ where $\boldsymbol{x}\cap \boldsymbol{y}$ is the vector in $F_2^n$, which has 1s precisely in those positions where both $\boldsymbol{x}$ and $\boldsymbol{y}$ have 1s.
    \\(2) If $\boldsymbol{x},\boldsymbol{y}\in F_2^n$, then $\boldsymbol{x}\cap \boldsymbol{y}\equiv x\cdot y \ (\mathrm{mod}\  2)$.
\end{lemma}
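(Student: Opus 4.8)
The plan is to reduce both identities to a coordinate-by-coordinate count. For each position $i \in \{1, \dots, n\}$ the pair $(x_i, y_i)$ takes one of the four values in $\mathbb{F}_2^2$, so I would partition the $n$ coordinates into four disjoint classes and let $n_{00}, n_{01}, n_{10}, n_{11}$ denote the number of positions where $(x_i, y_i)$ equals $(0,0), (0,1), (1,0), (1,1)$ respectively, with $n_{00} + n_{01} + n_{10} + n_{11} = n$. Everything then becomes bookkeeping over these four counts.

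For part (1), I would read off each weight directly from the partition. A position contributes to $\mathrm{wt}(\boldsymbol{x})$ exactly when $x_i = 1$, so $\mathrm{wt}(\boldsymbol{x}) = n_{10} + n_{11}$, and likewise $\mathrm{wt}(\boldsymbol{y}) = n_{01} + n_{11}$. Since addition in $\mathbb{F}_2$ gives $x_i + y_i = 1$ precisely on the classes $(1,0)$ and $(0,1)$, we get $\mathrm{wt}(\boldsymbol{x} + \boldsymbol{y}) = n_{10} + n_{01}$, while $\boldsymbol{x} \cap \boldsymbol{y}$ is supported exactly on the class $(1,1)$, giving $\mathrm{wt}(\boldsymbol{x} \cap \boldsymbol{y}) = n_{11}$. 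The identity follows by substituting these expressions, since $(n_{10} + n_{11}) + (n_{01} + n_{11}) - 2 n_{11} = n_{10} + n_{01}$.

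For part (2), I would note that the Euclidean inner product satisfies $\boldsymbol{x} \cdot \boldsymbol{y} = \sum_{i=1}^{n} x_i y_i$, and that the integer product $x_i y_i$ equals $1$ exactly on the class $(1,1)$ and $0$ otherwise. Hence, computing the sum over the integers, $\sum_i x_i y_i = n_{11} = \mathrm{wt}(\boldsymbol{x} \cap \boldsymbol{y})$, and reducing modulo $2$ yields the claimed congruence between $\mathrm{wt}(\boldsymbol{x} \cap \boldsymbol{y})$ and $\boldsymbol{x} \cdot \boldsymbol{y}$.

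There is no genuine obstacle here; the statement is an elementary counting fact, and once the four-class partition is fixed both parts are immediate. The only point demanding care is the mild notational abuse in part (2), where the left-hand side must be read as the weight $\mathrm{wt}(\boldsymbol{x} \cap \boldsymbol{y})$, a nonnegative integer, rather than the vector $\boldsymbol{x} \cap \boldsymbol{y}$ itself, so that comparing it modulo $2$ with the scalar $\boldsymbol{x} \cdot \boldsymbol{y}$ is meaningful.
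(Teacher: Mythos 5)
Your proof is correct and complete; the four-class coordinate partition is the standard argument for this identity. Note that the paper itself offers no proof at all --- it cites this lemma directly from its reference [24] (a standard coding-theory text) --- so there is nothing to compare against beyond observing that your argument is the usual textbook one, and your remark that the left-hand side of (2) should be read as $\mathrm{wt}(\boldsymbol{x}\cap\boldsymbol{y})$ correctly identifies a notational slip in the statement as printed.
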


\begin{theorem}\label{theorem3. 3}
    Let $\mathcal C$ be a self-dual double circulant code over $\mathbb F_2$ of length up to 20 with the generator $(1, f(x))$, where $f(x) \in R$. Then $\mathcal C$ is extremal if and only if $\mathrm{wt}(f(x)) = 3$ or $\mathrm{wt}(f(x)) > 3$ and there exist \(i, j, k\) such that the following equation holds:
    $$f(x) + x^i f(x) \equiv x^j + x^k \, (\mathrm{mod} \, x^m - 1), \quad 1 \le i \le m-1, \; 0 \le j \neq k \le m-1.$$
\end{theorem}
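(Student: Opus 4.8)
The plan is to reduce extremality to two weight conditions on $f(x)$ and verify each separately. Since $n=2m\le 20$, the right-hand side of (\ref{eq1}) equals $4$, so $\mathcal C$ is extremal precisely when $d(\mathcal C)=4$. Because a binary self-dual code has only even-weight codewords, this is equivalent to the two requirements already isolated before the theorem: $\mathcal C$ contains no codeword of weight $2$, and $\mathcal C$ contains at least one codeword of weight $4$. I would therefore translate each requirement into the language of $f(x)$, writing every codeword as $(r(x),r(x)f(x))$ with $r(x)\in R$, so that its weight is $\mathrm{wt}(r)+\mathrm{wt}(rf)$, and using that self-duality makes $f(x)$ invertible in $R$ with inverse $\overline{f(x)}$ (Lemma \ref{lemma99}).

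For the weight-$2$ count, invertibility of $f$ forces $rf=0$ only when $r=0$, so a nonzero weight-$2$ codeword must split as $\mathrm{wt}(r)=\mathrm{wt}(rf)=1$. Then $r=x^a$ and $\mathrm{wt}(x^af)=\mathrm{wt}(f)$, which equals $1$ exactly when $\mathrm{wt}(f)=1$. As $\mathrm{wt}(f)$ is odd, the absence of weight-$2$ codewords is thus equivalent to $\mathrm{wt}(f)\ge 3$.

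For the weight-$4$ count, I would examine the splits $4=\mathrm{wt}(r)+\mathrm{wt}(rf)$ with both summands $\ge 1$ (invertibility again excludes a zero block). The splits $(1,3)$ and $(3,1)$ each force $\mathrm{wt}(f)=3$; for $(3,1)$ one uses $rf=x^b\Rightarrow r=x^b\overline{f}$ together with $\mathrm{wt}(\overline f)=\mathrm{wt}(f)$. The split $(2,2)$ is the source of the displayed equation: factoring the weight-$2$ polynomial as $r=x^a(1+x^i)$ with $1\le i\le m-1$, the condition $\mathrm{wt}(rf)=2$ reads $\mathrm{wt}((1+x^i)f)=2$, i.e. $(1+x^i)f\equiv x^j+x^k$ with $j\ne k$, which is exactly $f+x^if\equiv x^j+x^k$. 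Conversely, $\mathrm{wt}(f)=3$ gives the weight-$4$ codeword $(1,f)$, and any solution of the equation gives the weight-$4$ codeword $(1+x^i,\,x^j+x^k)$. Hence a weight-$4$ codeword exists if and only if $\mathrm{wt}(f)=3$ or the equation is solvable.

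Combining the two counts, extremality is equivalent to $\mathrm{wt}(f)\ge 3$ together with [$\mathrm{wt}(f)=3$ or the equation holds]; distributing and absorbing the $\mathrm{wt}(f)=3$ instance of the equation into the first alternative collapses this to the stated criterion. The step I expect to be most delicate is the $(2,2)$ split: in the forward direction one must write $r=x^a+x^b$ as $x^a(1+x^i)$ with $i\equiv b-a\pmod m$ and check $1\le i\le m-1$, and in the reverse direction one must ensure $i\ne 0$ and $j\ne k$ so that each block of $(1+x^i,\,x^j+x^k)$ contributes weight exactly $2$ and the codeword has total weight $4$. Keeping this index bookkeeping consistent between the two directions is the main point of care; the $(1,3)$ and $(3,1)$ splits and the entire weight-$2$ analysis are routine once the invertibility of $f$ is in hand.
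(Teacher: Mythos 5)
Your proof is correct, and while the underlying decomposition is the same as the paper's, the way you execute the hard step is genuinely different and cleaner. The paper works in the parity-check matrix $H=(A^T,I)$ and, via Lemma~\ref{lemma2}, classifies four linearly dependent columns into cases (a), (b), (c) according to how many come from $A^T$; these are exactly your splits $(3,1)$, $(2,2)$, $(1,3)$ of $\mathrm{wt}(r)+\mathrm{wt}(rf)=4$. Where you diverge is in ruling out the $(3,1)$ split when $\mathrm{wt}(f)>3$: the paper needs Lemma~\ref{lemma15} to show the sum of three columns of $A^T$ has weight $\equiv 3\ (\mathrm{mod}\ 4)$ when $\mathrm{wt}(f)\in\{5,9\}$, and then a separate, rather long row-inner-product analysis for $\mathrm{wt}(f)=7$ with $m=8,9,10$. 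Your observation that self-duality makes $f$ a unit with inverse $\overline{f}$, so that $rf=x^b$ forces $r=x^b\overline{f}$ and hence $\mathrm{wt}(r)=\mathrm{wt}(\overline{f})=\mathrm{wt}(f)>3$, disposes of that entire case in one line, and unlike the paper's argument it does not depend on the bound $n\le 20$ (that bound is only needed to know extremality means $d=4$). The same invertibility argument also streamlines the weight-$2$ analysis. Your index bookkeeping in the $(2,2)$ split (writing $r=x^a(1+x^i)$ with $1\le i\le m-1$ and requiring $j\ne k$) is exactly what the paper does when it divides Eq.~(\ref{eq.(4)}) by $x^i$, so the two proofs agree on the source of the displayed equation; yours simply replaces the paper's case-by-case weight computations with a uniform algebraic step.
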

\begin{proof}
In the previous text we have obtained the parity check matrix of $\mathcal C$ $$H=(A^T, I).$$It can also be expressed in the form of 
$$
    \begin{pmatrix}
        [f(x)]^T&[xf(x)]^T&\cdots&[x^{m-1}f(x)]^T&[1]^T&[x]^T&\cdots&[{x^{m-1}}]^T
     \end{pmatrix},
$$
where $[f(x)]^T$ denotes the transpose of the coefficient vector of $f(x)$, and the rest entries are expressed similarly. 

Note that $I$ and $A^T$ are both orthogonal matrices. Then they are invertible, and any set of columns of each of them are linearly independent. Hence we only need to consider linearly dependent columns between them. We discuss three scenarios below: 

(1) $\mathrm{wt}(f(x))=1$, i.e. $f(x)=x^i\ (0\le i \le m-1)$. The first column must be linearly dependent with some column in $I$, then in this case $d=2$. 

(2) $\mathrm{wt}(f(x))=3$. If there are two columns that are linearly dependent in $H$, then $x^if(x)\equiv x^j\,  (\mathrm{mod}\,  x^m-1), 0\le i, j\le m-1$. This implies $f(x)=x^k, 0\le k\le m-1$, which is a contradiction with $\mathrm{wt}(f(x))=3$. Therefore, there is no codeword with weight 2.  Additionally, since $\mathrm{wt}(1, f(x))=4$, there exists at least one codeword of weight 4. 

(3) $\mathrm{wt}(f(x))> 3$. There is also no codeword with weight 2, the proof is similar to that of (2). Therefore, we need to clarify under what circumstances there exist four linearly dependent columns. 

(a) Three columns in the matrix $A^T$ are linearly dependent with one column in the matrix $I$. 

(b) Two columns in the matrix $A^T$ are linearly dependent with two columns in the matrix $I$. 

(c) One column in the matrix $A^T$ is linearly dependent with three columns in the matrix $I$. 

For (c), $\mathrm{wt}(f(x))=3$, which contradicts the assumption that $\mathrm{wt}(f(x))> 3$. For (a), since $\mathrm{wt}(f(x))>3$, $\mathrm{wt}(f(x))$ is odd and $n=2m\le 20$. Therefore, $\mathrm{wt}(f(x))=$5, 7 or 9. 

If $\mathrm{wt}(f(x))=$5 or 9. The weight of the sum of any three arbitrary columns $u$, $v$ and $w$ in $A^T$ is $$\mathrm{wt}(u+v+w)\equiv 1+1+1\equiv 3\ (\mathrm{mod}\ 4)$$ as stated in Lemma \ref{lemma15}. Therefore, when $\mathrm{wt}(f(x))=$5 or 9, (a) is impossible.

If $\mathrm{wt}(f(x))=$7. Then $m=$8, 9 or 10.

1) $m=8$. By observing the structure of the parity check matrix, only (b) can be satisfied in this case.

2) $m=9$. Let $f_{i_0}=f_{i_1}=0,\ 0\le i_0< i_1\le m-1$. Let $i_1-i_0=t$. The inner product of the first row and the second row of \( A \) is
\begin{flalign}\label{eq.(2)}
    &&
    f_0f_8+f_1f_0+\cdots+f_{i_0}f_{i_0-1}+f_{i_0+1}f_{i_0}+\cdots+f_{i_1}f_{i_1-1}+f_{i_1+1}f_{i_1}+\cdots+f_8f_7.
    &&
\end{flalign}
Only $f_{i_0}f_{i_0-1}$, $f_{i_0+1}f_{i_0}$, $f_{i_1}f_{i_1-1}$  and $f_{i_1+1}f_{i_1}$ are equal to zero. If there are no duplicates among these four terms, it can be concluded that  Eq.(\ref{eq.(2)})= 1, which contradicts the fact that 
$A$ is an orthogonal matrix. First, $f_{i_0}f_{i_0-1}$, $f_{i_1}f_{i_1-1}$ and $f_{i_1+1}f_{i_1}$ are definitely three distinct terms. The only possibility is $i_0+1=i_1$. Then $t=1$. In this case, $f_{i_0+1}f_{i_0}=f_{i_1}f_{i_1-1}$. Thus there are three terms in Eq.(\ref{eq.(2)}) equal to zero, so Eq.(\ref{eq.(2)})=0. However, at this point, the inner product of the first row and the third row of \( A \) is
\begin{flalign}\label{eq.(3)}
    &&
    f_0f_7+\cdots+f_{i_0}f_{i_0-2}+\cdots+f_{i_0+2}f_{i_0}+\cdots+f_{i_1}f_{i_1-2}+\cdots+f_{i_1+2}f_{i_1}+\cdots+f_8f_6.
    &&
\end{flalign}
If $t=1$, there are no duplicates among $f_{i_0}f_{i_0-2}$, $f_{i_0+2}f_{i_0}$, $f_{i_1}f_{i_1-2}$ and $f_{i_1+2}f_{i_1}$. Then Eq.(\ref{eq.(3)})=1, which contradicts the fact that 
$A$ is an orthogonal matrix. Therefore, when \( m = 9 \), there does not exist a self-dual code with \( \mathrm{wt}(f(x)) = 7 \).

3) $m=10$, Similarly to the discussion in 2). It can be proven that when \( m = 10 \), there does not exist a self-dual code with \( \mathrm{wt}(f(x)) = 7 \).

Therefore, $\mathcal C$ has the minimum weight 4 if and only if (b) is satisfied, i.e. $\mathcal C$ has the minimum weight 4 if and only if there exist \(i, j, k,l\) such that the following equation holds:
\begin{flalign}\label{eq.(4)}
    &&
    x^if(x)+x^jf(x) \equiv x^k+x^l\,  (\mathrm{mod}\,  x^m-1), 0\le i\neq j , k\neq l\le m-1
    &&
\end{flalign}
is satisfied. Divide both sides of Eq.(\ref{eq.(4)}) by $x^i$ to get the result in the theorem. 
\end{proof}
\begin{example}
    For $m=4$, let $f(x)=x^2+x+1$, which satisfies $f(x)\overline{f(x)}=1\, (\mathrm{mod}\,  x^4-1)$. Since $\mathrm{wt}(f(x))=3$, the double circulant code generated by $(1, x^2+x+1)$ over $\mathbb F_2$ is an $[8, 4, 4]$ extremal self-dual double circulant code according to Theorem \ref{theorem3. 3}. 
\end{example}
\begin{example}
    For $m=6$, let $f(x)=x^4+x^3+x^2+x+1$, which satisfies $f(x)\overline{f(x)}=1\, (\mathrm{mod}\,  x^6-1)$. It is easy to verify $f(x)+xf(x)=x^5+1\, (\mathrm{mod}\,  x^6-1)$. Thus the double circulant code generated by $(1, x^4+x^3+x^2+x+1)$ over $\mathbb F_2$ is an $[12, 6, 4]$ extremal self-dual double circulant code.
\end{example}
\begin{example}
    For $m=9$, let $f(x)=x^6+x^4+x^3+x+1$, which satisfies $f(x)\overline{f(x)}=1\, (\mathrm{mod}\,  x^9-1)$. It is easy to verify $f(x)+x^3f(x)=x^7+x\, (\mathrm{mod}\,  x^9-1)$. Thus the double circulant code generated by $(1, x^6+x^4+x^3+x+1)$ over $\mathbb F_2$ is an $[18, 9, 4]$ extremal self-dual double circulant code. 
\end{example}

By using the method in Theorem \ref{theorem3. 3}, we can obtain all extremal self-dual DC codes over $\mathbb F_2$ with length up to 20. Some of search results are listed in Table  \ref{tab:1}.

\begin{table}[width=.5\linewidth,cols=3,pos=h]
\caption{Extremal self-dual double circulant codes over $\mathbb F_2$ of length up to 20.}\label{tab:1}
\begin{tabular*}{\tblwidth}{@{} LLLL@{} }
\toprule
$\boldsymbol{m}$ & $\boldsymbol{f(x)}$ & $\boldsymbol{\mathcal C}$\\
\midrule
4 & $x^2+x+1$ & [8, 4, 4]\\
        6 & $x^4+x^3+x^2+x+1$ & [12, 6, 4]\\
        8 & $x^4+x^2+1$ & [16, 8, 4]\\
        8 & $x^6+x^5+x^4+x^2+1$ & [16, 8, 4]\\
        8 & $x^6+x^5+x^4+x^3+x^2+x+1$ & [16, 8, 4]\\
        9 & $x^6+x^4+x^3+x+1$ & [18, 9, 4]\\
        10 & $x^9+x^7+x^5+x^4+1$ & [20, 10, 4]\\
        10 & $x^8+x^7+x^6+x^5+x^4+x^3+x^2+x+1$ & [20, 10, 4]\\
\bottomrule
\end{tabular*}
\end{table}
\begin{theorem}\label{theorem3}
    Let $\mathcal C$ be a double circulant code over $\mathbb F_2$ of length $2m\ (2m\le20)$ with generator $(1, f(x))$, where $f(x) \in R$. If $f(x)=1+x^{\frac{m}{4}}+x^{\frac{m}{2}}$, then $\mathcal C$ is an extremal self-dual double circulant code. 
\end{theorem}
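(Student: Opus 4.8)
The plan is to verify the two defining properties separately: self-duality via Lemma \ref{lemma99}, and extremality via Theorem \ref{theorem3. 3}. First I would record that for $f(x)=1+x^{m/4}+x^{m/2}$ to be a well-defined element of $R$ with the stated exponents, $m$ must be divisible by $4$; combined with $2m\le 20$ this leaves only $m=4$ and $m=8$, recovering the $[8,4,4]$ code of the earlier example and the $[16,8,4]$ entry $x^4+x^2+1$ of Table \ref{tab:1}. It is cleaner, however, to run the computation with the abbreviation $a:=m/4$, writing $f(x)=1+x^{a}+x^{2a}$, since then every exponent is a multiple of $a$ and the reduction modulo $x^{m}-1=x^{4a}-1$ becomes transparent.

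The core step is the self-duality check. Using the definition $\overline{f(x)}=x^{m}f(1/x)\,(\mathrm{mod}\,x^{m}-1)$ from Lemma \ref{lemma99}, I would compute $\overline{f(x)}=x^{4a}+x^{3a}+x^{2a}\equiv 1+x^{2a}+x^{3a}$. Expanding the product $(1+x^{a}+x^{2a})(1+x^{2a}+x^{3a})$ yields the nine monomials $1,\ x^{2a},\ x^{3a},\ x^{a},\ x^{3a},\ x^{4a},\ x^{2a},\ x^{4a},\ x^{5a}$, i.e. coefficients $1,1,2,2,2,1$ at $x^{0},x^{a},x^{2a},x^{3a},x^{4a},x^{5a}$; reducing via $x^{4a}\equiv 1$ and $x^{5a}\equiv x^{a}$ and collecting over $\mathbb F_2$, the coefficients of $x^{a},x^{2a},x^{3a}$ each become even (hence $0$), while the constant term is $1+2\equiv1$. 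Thus $f(x)\overline{f(x)}\equiv 1\,(\mathrm{mod}\,x^{m}-1)$, and Lemma \ref{lemma99} gives that $\mathcal C$ is self-dual. I would note that this calculation is purely symbolic in $a$, so it in fact holds for every $m$ divisible by $4$; the bound $2m\le 20$ plays no role in this part.

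With self-duality in hand, extremality is immediate: since $\mathrm{wt}(f(x))=3$, Theorem \ref{theorem3. 3} (which applies precisely to self-dual DC codes of length up to $20$) yields at once that $\mathcal C$ is extremal, and here the hypothesis $2m\le 20$ is exactly what licenses that appeal. I do not expect a genuine obstacle; the only point requiring care is the bookkeeping of exponents modulo $x^{m}-1$, in particular tracking that the repeated terms $x^{2a},x^{3a},x^{4a}$ carry coefficient $2$ and that the wrap-around $x^{5a}\equiv x^{a}$ pairs with the existing $x^{a}$, so that cancellation over $\mathbb F_2$ leaves exactly the constant $1$.
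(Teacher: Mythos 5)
Your proposal is correct and follows essentially the same route as the paper: a direct expansion of $f(x)\overline{f(x)}$ reduced modulo $x^m-1$ to verify self-duality via Lemma \ref{lemma99}, followed by the observation that $\mathrm{wt}(f(x))=3$ so that Theorem \ref{theorem3. 3} gives extremality. Your additional remarks (that $4\mid m$ forces $m\in\{4,8\}$ under the length bound, and that the self-duality computation is independent of that bound) are accurate but not needed.
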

\begin{proof}
   $f(x)\overline{f(x)}=(1+x^{\frac{m}{4}}+x^{\frac{m}{2}})(1+x^{\frac{3m}{4}}+x^{\frac{m}{2}})=1+x^{\frac{3m}{4}}+x^{\frac{m}{2}}+x^{\frac{m}{4}}+x^m+x^{\frac{3m}{4}}+x^{\frac{m}{2}}+x^{\frac{5m}{4}}+x^m \equiv 1\,  (\mathrm{mod}\,  x^m-1). $ According to Lemma \ref{lemma99}, $\mathcal C$ is self-dual. Since $\mathrm{wt}(f(x))=3$, then $\mathcal C$ must be an extremal self-dual DC code according to Theorem \ref{theorem3. 3} . 
\end{proof}

\begin{theorem}\label{thoerem3. 5}
    Let $\mathcal C$ be a self-dual double circulant code over $\mathbb F_2$ of length 22 with the generator $(1, f(x))$, where $f(x) \in R$. When $\mathrm{wt}(f(x))=5$, $\mathcal C$ is extremal if and only if $$\mathrm{wt}(f(x)+x^if(x))\neq 2\quad \forall i,\  1\le i\le m-1.$$
\end{theorem}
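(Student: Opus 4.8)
The plan is to mirror the strategy of Theorem \ref{theorem3. 3}, but adapted to length $n=2m=22$ (so $m=11$), where by Eq.(\ref{eq1}) the extremal minimum distance is $6$ rather than $4$. Since $\mathcal{C}$ is self-dual over $\mathbb{F}_2$ all its weights are even, and the codeword $(1,f(x))$ has weight $1+\mathrm{wt}(f(x))=6$, so $d(\mathcal{C})\le 6$ automatically. Hence $\mathcal{C}$ is extremal if and only if it has no codeword of weight $2$ and no codeword of weight $4$. First I would make this reduction explicit, and then translate the absence of low-weight codewords into column dependencies of the parity check matrix $H=(A^T,I)$ via Lemma \ref{lemma2}, exactly as before.

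Next I would dispose of weight $2$. As in case (3) of Theorem \ref{theorem3. 3}, columns within $A^T$ and within $I$ are individually linearly independent, so a weight-$2$ codeword would force one column of $A^T$ to equal one column of $I$, giving $x^i f(x)\equiv x^j\ (\mathrm{mod}\ x^m-1)$ and hence $\mathrm{wt}(f(x))=1$, contradicting $\mathrm{wt}(f(x))=5$. Thus no weight-$2$ codeword exists, and the entire question reduces to whether a weight-$4$ codeword exists. Such a codeword corresponds to four distinct columns of $H$ summing to zero, which split into the three cases: (a) three columns of $A^T$ with one of $I$; (b) two of each; (c) one column of $A^T$ with three of $I$.

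The core of the argument is to rule out (a) and (c) and to identify (b) with the stated condition. Case (c) forces $x^i f(x)=x^{j_1}+x^{j_2}+x^{j_3}$, i.e. $\mathrm{wt}(f(x))=3$, a contradiction. For (a) I would exploit the orthogonality $AA^T=I_m$: distinct rows $u,v,w$ of $A$ satisfy $u\cdot v=v\cdot w=u\cdot w=0$, so by Lemma \ref{lemma15} the relevant intersection weights are even, and a two-step application of the weight formula gives $\mathrm{wt}(u+v+w)\equiv 1+1+1\equiv 3\ (\mathrm{mod}\ 4)$; since this is never $1$, three columns of $A^T$ can never sum to a single column of $I$, so (a) is impossible. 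This is precisely the $\mathrm{wt}(f)\in\{5,9\}$ computation already used in Theorem \ref{theorem3. 3}, and it holds for every $m$, so none of the extra $m=9,10$ style case analysis needed there for weight $7$ arises here.

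Finally, the only remaining source of a weight-$4$ codeword is (b): two columns $x^{i_1}f(x),x^{i_2}f(x)$ of $A^T$ together with two columns $x^{j_1},x^{j_2}$ of $I$ summing to zero, i.e. $(x^{i_1}+x^{i_2})f(x)\equiv x^{j_1}+x^{j_2}\ (\mathrm{mod}\ x^m-1)$ with $i_1\ne i_2$ and $j_1\ne j_2$. Dividing by $x^{i_1}$ and setting $i=i_2-i_1\in\{1,\dots,m-1\}$ rewrites this as $f(x)+x^if(x)=x^k+x^l$ with $k\ne l$, equivalently $\mathrm{wt}(f(x)+x^if(x))=2$, since a weight-$2$ element of $R$ is exactly a binomial $x^k+x^l$; the converse reconstruction is immediate. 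Hence $\mathcal{C}$ admits a weight-$4$ codeword if and only if $\mathrm{wt}(f(x)+x^if(x))=2$ for some $1\le i\le m-1$, and combining this with the absence of weight-$2$ codewords yields the claimed equivalence. The main point requiring care is the logical inversion relative to Theorem \ref{theorem3. 3}: there condition (b) was the \emph{certificate} of extremality (it supplied the required minimum-weight word at the extremal distance $4$), whereas here a weight-$4$ word \emph{obstructs} extremality, so (b) must be forbidden; the technical heart, and the step I expect to need the most attention, is the mod-$4$ weight computation eliminating case (a).
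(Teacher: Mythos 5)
Your proposal is correct and follows essentially the same route as the paper's own proof: reduce extremality to the absence of weight-$2$ and weight-$4$ codewords (weight $6$ being supplied by $(1,f(x))$), split four dependent columns of $H=(A^T,I)$ into cases (a)--(c), kill (a) by the mod-$4$ weight computation using orthogonality of the rows of $A$, and identify the surviving case (b) with $\mathrm{wt}(f(x)+x^if(x))=2$. The only difference is that you spell out the case-(b) translation and the ``logical inversion'' more explicitly than the paper, which simply cites the proof of Theorem \ref{theorem3. 3}.
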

\begin{proof}
    Note that for $n=22$, the minimum distance for $\mathcal C$ to be extremal is 6. It is clear when $\mathrm{wt}(f(x))=1$ or $\mathrm{wt}(f(x))=3$, the code $\mathcal C$ is not extremal. When $\mathrm{wt}(f(x))=5$, we have $\mathrm{wt}(1, f(x))=6$, indicating that there exist codewords of weight 6. Thus we just need to consider the conditions that there is no codeword of weight 2 and 4. 
    
    Similar to the proof of Theorem \ref{theorem3. 3}, when $\mathrm{wt}(f(x))=5$, there is no codeword of weight 2. Let's consider the case there exists a codeword of weight 4,  that is,  four columns of the parity check matrix are linearly dependent. 
    
    (a) Three columns in the matrix $A^T$ are linearly dependent with one column in the matrix $I$. 

    (b) Two columns in the matrix $A^T$ are linearly dependent with two columns in the matrix $I$. 

    (c) One column in the matrix $A^T$ is linearly dependent with three columns in the matrix $I$. \\
    It is obvious (c) is impossible.  Given that $\mathrm{wt}(f(x))=5$, when we add any two different columns $\textbf{\emph{x}}$ and $\textbf{\emph{y}}$ of $A^T$ together. The weight of $\textbf{\emph{x}}+\textbf{\emph{y}}$ is given by $$\mathrm{wt}(\textbf{\emph{x}}+\textbf{\emph{y}})\equiv 1+1+0\equiv 2\, (\mathrm{mod}\, 4).$$ Similarly, if we add any three different columns $\textbf{\emph{x}}$, $\textbf{\emph{y}}$ and $\bold z$ of $A^T$ together, we can get :$$\mathrm{wt}(\textbf{\emph{x}}+\textbf{\emph{y}}+\bold z)\equiv 2+1+0\equiv 3\, (\mathrm{mod}\, 4).$$ Therefore, (a) is also impossible, the result is now clear. 
\end{proof}

\begin{lemma}\label{lemma3.6}
    Let $\mathcal C$ be a double circulant code generated by $(1, f(x))$ over $\mathbb F_2$. If $\mathrm{wt}(f(x))\equiv 3\, (\mathrm{mod}\, 4)$. Then $\mathcal C$ is doubly even. 
\end{lemma}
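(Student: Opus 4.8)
The plan is to prove the general fact that a binary linear code is doubly even whenever it has a generating set consisting of vectors each of weight divisible by $4$ that are pairwise orthogonal, and then to check both hypotheses for the rows $v_i=(x^i,x^if(x))$, $0\le i\le m-1$, of the generator matrix $G=(I_m,A)$. Recording the weights is immediate: multiplication by $x^i$ in $R$ is a cyclic shift and hence weight-preserving, so $\mathrm{wt}(v_i)=1+\mathrm{wt}(f(x))$, and the hypothesis $\mathrm{wt}(f(x))\equiv3\pmod4$ gives $\mathrm{wt}(v_i)\equiv0\pmod4$ for every $i$.

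Next I would establish that the $v_i$ are pairwise orthogonal, and this is where the self-dual structure is essential. The rows are orthogonal exactly when $GG^{T}=0$, i.e. $AA^{T}=I_m$, which by Lemma \ref{lemma99} is equivalent to $f(x)\overline{f(x)}=1\pmod{x^m-1}$; for such $f$ the off-diagonal entries of $GG^{T}$, which are precisely the inner products $v_i\cdot v_j$ with $i\ne j$, all vanish, while the diagonal entries $1+(\mathrm{wt}(f(x))\bmod2)$ vanish because $\mathrm{wt}(f(x))$ is odd. I expect this to be the real obstacle, since the weight condition alone is not sufficient: for instance with $m=3$ and $f=1+x+x^2$ each row still has weight $4$, yet $v_0+v_1$ has weight $2$, so orthogonality cannot be dropped and must be supplied by $f(x)\overline{f(x)}=1$.

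Finally I would combine the two facts by induction on the number of rows appearing in a codeword. Writing a codeword as $\boldsymbol w=v_{i_1}+\cdots+v_{i_t}$ with distinct indices, I assume $\mathrm{wt}(v_{i_1}+\cdots+v_{i_{t-1}})\equiv0\pmod4$; by bilinearity the inner product of this partial sum with $v_{i_t}$ equals $\sum_{s<t}v_{i_s}\cdot v_{i_t}=0$, so the two are orthogonal. Lemma \ref{lemma15} then gives $\mathrm{wt}(\boldsymbol w)=\mathrm{wt}(v_{i_1}+\cdots+v_{i_{t-1}})+\mathrm{wt}(v_{i_t})-2\,\mathrm{wt}\big((v_{i_1}+\cdots+v_{i_{t-1}})\cap v_{i_t}\big)$, and since $\mathrm{wt}(\boldsymbol x\cap\boldsymbol y)\equiv\boldsymbol x\cdot\boldsymbol y\pmod2$ the intersection weight is even, making the last term $\equiv0\pmod4$. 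As the first two terms are $\equiv0\pmod4$ by the weight computation above, we obtain $\mathrm{wt}(\boldsymbol w)\equiv0\pmod4$, which completes the induction and shows that $\mathcal C$ is doubly even.
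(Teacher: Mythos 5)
Your proof is correct and follows essentially the same route as the paper's: each row of the generator matrix has weight $1+\mathrm{wt}(f(x))\equiv 0\pmod 4$, self-duality makes the rows pairwise orthogonal, and Lemma \ref{lemma15} together with induction on the number of rows in a codeword yields doubly-evenness (your induction step is in fact spelled out more carefully than the paper's one-line ``proceed by induction''). Your side observation is also well taken: the lemma as printed omits the self-duality hypothesis, yet the paper's proof invokes it explicitly, and your counterexample $m=3$, $f(x)=1+x+x^2$ (for which $f(x)\overline{f(x)}\neq 1$ and $v_0+v_1$ has weight $2$) confirms that the weight condition alone is not sufficient, so the hypothesis genuinely belongs in the statement.
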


\begin{proof}
    Let $\textbf{\emph{x}}$ and $\textbf{\emph{y}}$ be the rows of the generator matrix of $\mathcal C$, then $\mathrm{wt}(\textbf{\emph{x}})\equiv \mathrm{wt}(\textbf{\emph{y}})\equiv 0\, (\mathrm{mod}\, 4)$. Since $\mathcal C$ is self-dual, $\textbf{\emph{x}} \cdot \textbf{\emph{y}}\equiv 0\, (\mathrm{mod}\, 2)$, and  $2\mathrm{wt}(\textbf{\emph{x}}\cap \textbf{\emph{y}})\equiv 0\, (\mathrm{mod}\, 4)$, then 
    $$\mathrm{wt}(\textbf{\emph{x}}+\textbf{\emph{y}})=\mathrm{wt}(\textbf{\emph{x}})+\mathrm{wt}(\textbf{\emph{y}})-2\mathrm{wt}(\textbf{\emph{x}}\cap \textbf{\emph{y}})\equiv 0+0+0=0\, (\mathrm{mod}\, 4).$$Now proceed by induction as every codeword is a sum of rows of the generator matrix. 
\end{proof}

\begin{theorem}\label{theorem3. 7}
    Let $\mathcal C$ be a self-dual double circulant code over $\mathbb F_2$ of length $2m$ \, $(12\le m \le 22)$ with the generator $(1, f(x))$, where $f(x) \in R$. When $\mathrm{wt}(f(x))=7$, $\mathcal C$ is extremal if and only if 
    \begin{align*}\left\{\begin{aligned}
        \mathrm{wt}(f(x)+x^if(x))&\neq 2 \quad\forall i,\  1\le i\le m-1,\\\mathrm{wt}(f(x)+x^if(x)+x^jf(x)) &\neq 1 \quad\forall i, j,\  1\le i\neq j \le m-1
    \end{aligned}\right. \end{align*}
    hold. 
\end{theorem}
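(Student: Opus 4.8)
The plan is to reduce extremality to the absence of weight-$4$ codewords and then translate that absence into the two stated conditions through a column-dependence analysis of the parity-check matrix, exactly in the spirit of Theorem~\ref{theorem3. 3}. First I would pin down the extremal distance: for $12\le m\le 22$ the length $n=2m$ satisfies $24\le n\le 44$, so $\lfloor n/24\rfloor=1$ and $n\not\equiv 22\,(\mathrm{mod}\,24)$; hence Eq.(\ref{eq1}) gives extremal distance $d=8$. Since $\mathrm{wt}(f(x))=7$, the codeword $(1,f(x))$ has weight $8$, so a weight-$8$ codeword always exists. Moreover $\mathrm{wt}(f(x))=7\equiv 3\,(\mathrm{mod}\,4)$, so by Lemma~\ref{lemma3.6} $\mathcal C$ is doubly even and every nonzero weight is a multiple of $4$. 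Consequently the minimum distance is either $4$ or $8$, and $\mathcal C$ is extremal \emph{if and only if} it has no codeword of weight $4$.

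Next I would characterise weight-$4$ codewords through the parity-check matrix $H=(A^{T},I)$, whose columns are $[x^{i}f(x)]^{T}$ $(0\le i\le m-1)$ followed by $[x^{j}]^{T}$ $(0\le j\le m-1)$. By Lemma~\ref{lemma2}, a weight-$4$ codeword corresponds to exactly four distinct columns of $H$ summing to $\boldsymbol 0$ over $\mathbb F_2$. Because $A^{T}$ and $I$ are invertible, no four columns drawn entirely from a single block can be dependent; and since $\mathrm{wt}(f(x))=7\ne 1$, no column of $A^{T}$ equals a column of $I$, so all $2m$ columns are distinct. Thus a weight-$4$ codeword must mix the two blocks, and splitting by the number of columns taken from $A^{T}$ versus $I$ leaves precisely the three cases (a) $3{+}1$, (b) $2{+}2$, (c) $1{+}3$.

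I would then dispose of the cases in turn. Case (c) forces $x^{i}f(x)$ to have weight $3$, contradicting $\mathrm{wt}(f(x))=7$, so it never occurs. Case (b) reads $x^{i_1}f(x)+x^{i_2}f(x)=x^{k}+x^{l}$; dividing by $x^{i_1}$ (i.e.\ normalising one shift to $0$) shows it holds for some admissible indices exactly when $\mathrm{wt}(f(x)+x^{i}f(x))=2$ for some $1\le i\le m-1$, the negation of the first displayed condition. Likewise case (a) reads $x^{i_1}f(x)+x^{i_2}f(x)+x^{i_3}f(x)=x^{k}$, and after the same normalisation it holds exactly when $\mathrm{wt}(f(x)+x^{i}f(x)+x^{j}f(x))=1$ for some $1\le i\ne j\le m-1$, the negation of the second displayed condition. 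Combining these, a weight-$4$ codeword exists iff case (a) or (b) occurs iff at least one of the two conditions fails, which is exactly the contrapositive of the claim, covering both the ``if'' and ``only if'' directions.

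The step I expect to require the most care is case (a), because---unlike the weights $5$ and $9$ treated in Theorem~\ref{theorem3. 3}---it cannot be excluded by a parity count. Indeed, for three distinct columns $u,v,w$ of the orthogonal matrix $A^{T}$ we have $u\cdot v\equiv u\cdot w\equiv v\cdot w\equiv 0\,(\mathrm{mod}\,2)$, so all pairwise intersections are even and Lemma~\ref{lemma15} gives $\mathrm{wt}(u+v+w)\equiv 7+7+7\equiv 1\,(\mathrm{mod}\,4)$; since $1$ is an admissible residue, $\mathrm{wt}(u+v+w)=1$ is \emph{not} ruled out. This is precisely why weight $7$ genuinely needs the extra constraint on $f(x)+x^{i}f(x)+x^{j}f(x)$, and the bulk of the write-up will be the careful normalisation argument establishing that cases (a) and (b) correspond bijectively to the two stated weight conditions.
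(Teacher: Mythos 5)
Your proposal is correct and follows essentially the same route as the paper: it reduces extremality to the absence of weight-$4$ codewords (using Lemma~\ref{lemma3.6} to dispose of weights $2$ and $6$) and then translates the $3{+}1$, $2{+}2$, and $1{+}3$ column-splitting cases of the parity-check matrix $(A^T, I)$ into the two stated weight conditions. The paper compresses this column analysis into a bare reference to the proof of Theorem~\ref{thoerem3. 5}, whereas you spell it out and correctly observe that, unlike weight $5$, the $3{+}1$ case cannot be excluded by the mod-$4$ parity count --- which is precisely why the second condition appears in the statement.
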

\begin{proof}
    The minimum distance for $\mathcal C$ to be extremal is 8 according to Eq.(\ref{eq1}). Since $\mathrm{wt}(f(x))=7\equiv3\,  (\mathrm{mod}\,  4)$, there is no codeword of weight 2 and 6 according to Lemma \ref{lemma3.6}. Thus we only need to consider the case there is no linear correlation of four columns in the parity check matrix of $\mathcal C$. Following the proof of Theorem \ref{thoerem3. 5}, the result is clear, we will not elaborate further here. 
\end{proof}
\begin{table}[width=.5\linewidth,cols=3,pos=h]
\caption{Extremal self-dual double circulant codes over $\mathbb F_2$ of length between 24 and 44.}\label{tab:2}
\begin{tabular*}{\tblwidth}{@{} LLLL@{} }
\toprule
$\boldsymbol{m}$ & $\boldsymbol{f(x)}$ & $\boldsymbol{\mathcal C}$\\
\midrule
12 & $x^8+x^6+x^5+x^4+x^3+x+1$ & [24, 12, 8]\\
        16 & $x^9+x^8+x^7+x^6+x^5+x^3+1$ & [32, 12, 8]\\
        20 & $x^{10}+x^9+x^8+x^4+x^3+x+1$ & [40, 20, 8]\\
\bottomrule
\end{tabular*}
\end{table}
When the code length is between 24 and 44, due to the particularity of $\mathrm{wt}(f(x))=7$, it is convenient for us to discuss the conditions for self-dual DC codes to be extremal. However, for longer code length or $\mathrm{wt}(f(x))>7$. If the methods similar to the previous theorems are applied, the results become cumbersome and may not significantly aid in finding the extremal self-dual DC code. Therefore, only the special cases are discussed in this paper. Table \ref{tab:2} shows some polynomials that satisfy the conditions in Theorem \ref{theorem3. 7} and the parameters of the corresponding codes. 

Since there is a one-to-one correspondence between orthogonal circulant matrices and double circulant codes, once we determine the number of $m \times m$ orthogonal circulant matrices, we can determine the number of $[2m, m]$ self-dual double circulant codes.

Let $\mathbb F_q$ be the Galois filed and $m$ be a positive integer relatively prime to $q$. Let $h=\mathrm{ord}_m(q)$ and let $\alpha$ be a primitive $m$th root of unit in $\mathbb F_{q^h}$. Then for each integer $s$ with $0\le s \le m$, the minimal polynomial of $\alpha^s$ over $\mathbb F_q$ is $M_{\alpha^s}(x)=\prod \limits_{i\in C_s}(x-\alpha^i)$, where $C_s$ is the $q$-cyclotomic coset of $s$ modulo $m$. Furthermore, $$x^m-1=\prod \limits_s M_{\alpha^s}(x)$$ is the factorization of $x^m-1$ into irreducible factors over $\mathbb F_q$, where $s$ runs through a set of representations of the $q$-cyclotomic coset of $s$ modulo $m$. Given that both $C_s$ and $C_{-s}$ are $q$-cyclotomic coset modulo $m$, the reciprocal polynomial of each irreducible factor of $x^m - 1$ is also an irreducible factor of $x^m-1$. We decompose $x^m - 1$ in an alternative form as follows:
$$x^m-1=(x-1)\prod_{i=1}^rf_i(x),$$
where \( f_i(x) \) is the minimal polynomial of some $\alpha^s$. Let $k_i$ be the degree of $f_i(x)$. By arranging the $f_i(x)s$ so that $f_1(x),f_2(x), \cdots ,f_t(x)$ are self-reciprocal, thus their degrees are even, i.e.  $k_i=2c_i,\,1\le i \le t$ ($f_0(x)=x-1$ is a special case) and arranging the other $f_i(x)s$ in pairs so that $$g_j(x)=f_j(x)f_j^*(x),$$ $g_j(x)$ is of degree $2d_j$. Hence
$$x^m-1=(x-1)\prod_{i=1}^tf_i(x)\prod_{j=t+1}^ug_j(x),$$ where $u=(r+t)/2$.

MacWilliams \cite{ref6} presented the number of $m \times m$ orthogonal circulant matrices over $\mathbb F_2$ in 1971 as the following lemma shows. 
\begin{lemma} \cite{ref6}\label{lemma3}
    For $q=2$ and $\mathrm{gcd}(m, q)=1$, the number of $m \times m$ orthogonal circulant matrices over $\mathbb F_2$ is denoted by $|O_m|$, and
    $$|O_m| =\prod _{i=1}^t(2^{c_i}+1)\prod _{j=t+1}^u(2^{d_j}-1).$$
\end{lemma}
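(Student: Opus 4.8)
The plan is to convert the matrix count into a polynomial count and then localize via the Chinese Remainder Theorem. By the Remark following Lemma~\ref{lemma99} together with Lemma~\ref{lemma99} itself, an $m\times m$ circulant matrix $A$ whose first row is the coefficient vector of $a(x)\in R$ satisfies $AA^T=I_m$ if and only if $a(x)\overline{a(x)}=1\,(\mathrm{mod}\,x^m-1)$, where $\overline{a(x)}=a(x^{-1})$ in $R$. Hence $|O_m|$ equals the number of $f\in R$ with $f\bar f=1$, and every such $f$ is automatically a unit with inverse $\bar f$. I would therefore work inside $R=\mathbb F_2[x]/(x^m-1)$ and study the ring involution $\sigma\colon f\mapsto\bar f$ induced by $x\mapsto x^{-1}$, counting its ``unitary'' elements $f$ with $f\,\sigma(f)=1$.

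Since $\mathrm{gcd}(m,2)=1$, the polynomial $x^m-1$ is squarefree, so the factorization $x^m-1=(x-1)\prod_{i=1}^t f_i(x)\prod_{j=t+1}^u g_j(x)$ with $g_j=f_jf_j^*$ yields, by CRT, a decomposition
$$R\cong \mathbb F_2\times\prod_{i=1}^t\mathbb F_{2^{2c_i}}\times\prod_{j=t+1}^u\bigl(\mathbb F_{2^{d_j}}\times\mathbb F_{2^{d_j}}\bigr).$$
The crucial point is that $\sigma$ preserves this decomposition: because $x\mapsto x^{-1}$ sends the minimal polynomial $M_{\alpha^s}$ to $M_{\alpha^{-s}}$, it fixes each self-reciprocal factor (where $C_s=C_{-s}$) and swaps the two members of each reciprocal pair (where $C_s\neq C_{-s}$). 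Since the components are independent, $|O_m|$ factors as the product of the local solution counts, and the problem splits into three cases: the $x-1$ factor, the self-reciprocal factors, and the reciprocal pairs.

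I would then count within each component. On the $\mathbb F_2$-component from $x-1$ the involution is trivial, so $f\sigma(f)=1$ forces $y^2=1$, i.e. $y=1$, giving the factor $1$ (the special role of $f_0=x-1$). On each self-reciprocal component $K=\mathbb F_{2^{2c_i}}$, $\sigma$ restricts to the unique order-two automorphism $y\mapsto y^{2^{c_i}}$ of $K$ over $\mathbb F_{2^{c_i}}$, so $f\sigma(f)=1$ becomes the norm-one equation $y^{2^{c_i}+1}=1$; its number of solutions is $\lvert K^*\rvert/\lvert\mathbb F_{2^{c_i}}^*\rvert=(2^{2c_i}-1)/(2^{c_i}-1)=2^{c_i}+1$ by surjectivity of the norm map. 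On each reciprocal-pair component $\mathbb F_{2^{d_j}}\times\mathbb F_{2^{d_j}}$, $\sigma$ interchanges the two coordinate fields through the isomorphism $\psi$ induced by $x\mapsto x^{-1}$, so writing $f=(y_1,y_2)$ the equation $f\sigma(f)=1$ collapses to the single relation $y_2=\psi^{-1}(y_1)^{-1}$, leaving $y_1$ free over $K^*$ and hence contributing $2^{d_j}-1$ solutions. Multiplying the local counts gives exactly $\prod_{i=1}^t(2^{c_i}+1)\prod_{j=t+1}^u(2^{d_j}-1)$.

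The main obstacle will be the bookkeeping of how $\sigma$ acts on each local factor. In the self-reciprocal case I must confirm that $\sigma$ is nontrivial rather than the identity, so that it is genuinely the map $y\mapsto y^{2^{c_i}}$: this follows from $-s\not\equiv s$ combined with $-s\equiv 2^{c_i}s\,(\mathrm{mod}\,m)$, which identifies $\sigma$ with the Frobenius power of order $c_i$. In the reciprocal-pair case I must check that the swap is realized consistently by a single field isomorphism $\psi$, so that the two a priori equations in the two coordinates are equivalent and reduce to one free choice; once this identification is made the counting is immediate, and assembling the factors completes the proof.
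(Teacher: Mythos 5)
The paper does not prove this lemma; it is imported verbatim from MacWilliams's 1971 paper \cite{ref6}, so there is no in-paper proof to compare against. Your argument is correct and is essentially the classical one: reduce to counting $f\in R$ with $f\overline{f}=1$, split $R$ by CRT over the squarefree factorization of $x^m-1$, observe that the involution $x\mapsto x^{-1}$ fixes the self-reciprocal components (acting there as the order-two Frobenius $y\mapsto y^{2^{c_i}}$, which you correctly justify via $-s\equiv 2^{c_i}s\pmod m$) and swaps the paired components, and then count norm-one elements ($2^{c_i}+1$) and free units ($2^{d_j}-1$) locally. The two delicate points you flag — that $\sigma$ is genuinely nontrivial on each self-reciprocal field, and that the two coordinate equations in a swapped pair coincide because $\sigma^2=\mathrm{id}$ forces the two transport isomorphisms to be mutually inverse — are exactly the ones that need checking, and your resolutions of both are sound.
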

    
\begin{lemma} \cite{ref6}
    For $q=2$ and m=sq. If s is odd, $|O_{2s}|=2^{(s+1)/2}|O_s|$. If s is even but $s/2$ is odd, $|O_{2s}|=2^{s/2+1}|O_s|$. If $s\equiv 0\, (\mathrm{mod}\, 4), \,|O_{2s}|=2^{s/2}|O_s|$.
\end{lemma}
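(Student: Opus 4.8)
The plan is to exploit the characteristic-two identity $x^{2s}-1=(x^s-1)^2$, which lets me compare the ambient ring $R_{2s}:=\mathbb F_2[x]/(x^{2s}-1)$ with $R_s:=\mathbb F_2[x]/(x^s-1)$ through the reduction homomorphism $\pi\colon R_{2s}\to R_s$, $a(x)\mapsto a(x)\bmod(x^s-1)$. By Lemma \ref{lemma99} and the remark following it, $|O_m|$ counts the $a(x)\in R_m$ with $a(x)\overline{a(x)}=1$, and these norm-one elements form a group under multiplication. Since the bar operation is given by $x\mapsto x^{-1}$ in both rings, $\pi$ is bar-equivariant, i.e. $\pi(\overline a)=\overline{\pi(a)}$; hence it sends norm-one elements to norm-one elements and restricts to a group homomorphism $\pi\colon O_{2s}\to O_s$. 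Everything then reduces to the two factors in $|O_{2s}|=|\operatorname{im}\pi|\cdot|\ker\pi|$.

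First I would compute the kernel. An element of $\ker\pi\cap O_{2s}$ has the form $1+z$ with $z$ in the square-zero ideal $I:=(x^s-1)R_{2s}=\ker\pi$; expanding $(1+z)(1+\overline z)=1$ and using $z\overline z\in I^2=0$ shows that the norm-one condition is simply $\overline z=z$. Thus $|\ker\pi|=2^{\dim_{\mathbb F_2}I^{+}}$, where $I^{+}$ is the bar-fixed subspace of $I$. Because $x^s-1$ is itself bar-fixed, multiplication by $x^s-1$ furnishes a bar-equivariant isomorphism $R_s\cong I$, so $I^{+}\cong R_s^{+}$. On the basis $\{1,x,\dots,x^{s-1}\}$ the bar map is the permutation $k\mapsto -k\bmod s$, and the dimension of its fixed space equals its number of orbits, namely $(s+P)/2$, where $P$ is the number of $k$ with $2k\equiv 0\ (\mathrm{mod}\ s)$. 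Hence $P=1$ and $|\ker\pi|=2^{(s+1)/2}$ when $s$ is odd, while $P=2$ and $|\ker\pi|=2^{s/2+1}$ when $s$ is even. This already yields the correct power of two in the first two cases and is off by exactly one factor of $2$ in the third.

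The decisive remaining question is whether $\pi\colon O_{2s}\to O_s$ is surjective. Lifting $b\in O_s$ to a unit $\tilde b\in R_{2s}$, one finds $\tilde b\,\overline{\tilde b}=1+w_b$ with $w_b\in I^{+}$, and $b$ lies in $\operatorname{im}\pi$ iff $u+\overline u=w_b$ is solvable for $u\in I$. The obstruction thus lives in $\operatorname{coker}(u\mapsto u+\overline u\colon I\to I^{+})$, a space of dimension $P$ spanned by the fixed coordinates $x^0$ and (for even $s$) $x^{s/2}$; this defines a homomorphism $\Delta\colon O_s\to\mathbb F_2^{P}$ with $\operatorname{im}\pi=\ker\Delta$, so that $|O_{2s}|=|O_s|\cdot 2^{\dim I^{+}}/|\operatorname{im}\Delta|$. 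The $x^0$-component of $\Delta$ vanishes identically, since every norm-one element satisfies $a(1)=1$ by the odd-weight property noted earlier, so no obstruction is visible to the augmentation. Everything then hinges on the $x^{s/2}$-component, which is present only for even $s$ and, upon writing $s=2^{b}s'$ with $s'$ odd and concentrating at the local factor $\mathbb F_2[x]/(x-1)^{2^{b}}$, is nontrivial precisely when $s/2$ is even, i.e. when $s\equiv 0\ (\mathrm{mod}\ 4)$.

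I expect this local analysis at the prime $(x-1)$ to be the main obstacle: one must show that $\Delta$ is trivial for $s\not\equiv 0\ (\mathrm{mod}\ 4)$ but has image of order $2$ when $4\mid s$, which amounts to tracking how norm-one elements of $\mathbb F_2[x]/(x-1)^{2^{b}}$ lift to $\mathbb F_2[x]/(x-1)^{2^{b+1}}$ and isolating the single new obstruction that emerges once $b\ge 2$. Granting this, $|\operatorname{im}\Delta|=1$ in the first two cases and $2$ in the third; combining with the kernel count gives $|O_{2s}|=2^{(s+1)/2}|O_s|$ for $s$ odd, $|O_{2s}|=2^{s/2+1}|O_s|$ for $s$ even with $s/2$ odd, and $|O_{2s}|=2^{s/2}|O_s|$ for $s\equiv 0\ (\mathrm{mod}\ 4)$, as claimed.
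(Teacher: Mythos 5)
The paper offers no proof of this lemma at all --- it is quoted directly from MacWilliams \cite{ref6} --- so your attempt can only be judged on its own merits. Your framework is well chosen: in characteristic two $x^{2s}-1=(x^s-1)^2$, the reduction $\pi\colon R_{2s}\to R_s$ is bar-equivariant, the norm-one elements form a multiplicative group, and $|O_{2s}|=|\operatorname{im}\pi|\cdot|\ker(\pi|_{O_{2s}})|$. The kernel computation is complete and correct: the square-zero property of $I=(x^s-1)R_{2s}$ reduces the norm-one condition on $1+z$ to $z=\overline z$, and counting orbits of $k\mapsto -k$ on $\mathbb Z/s\mathbb Z$ gives $2^{(s+1)/2}$ for odd $s$ and $2^{s/2+1}$ for even $s$.

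The genuine gap is the determination of $|\operatorname{im}\pi|$, equivalently of $|\operatorname{im}\Delta|$ --- and this is exactly where the lemma's three-way case split comes from, so it cannot be ``granted.'' Two concrete problems. First, your argument that the $x^{0}$-component of $\Delta$ vanishes ``since every norm-one element satisfies $a(1)=1$'' does not work as stated: after identifying $I\cong R_s$ via multiplication by $x^s-1$, the obstruction class of $b$ is represented by the $c\in R_s$ with $w_b=(x^s-1)c$, and what must be shown is $c(1)=0$ (modulo $\operatorname{im}(1+\sigma)$); but the augmentation of $w_b$ itself in $R_{2s}$ equals $(1-1)\,c(1)=0$ for \emph{every} $b$, so the observation $\tilde b(1)\overline{\tilde b}(1)=1$ detects nothing about $c(1)$. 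One actually has to evaluate the sum of the first $s$ coefficients of $\tilde b\overline{\tilde b}$, a nontrivial autocorrelation computation that you do not perform. Second, the whole $x^{s/2}$-analysis --- that $\Delta$ is trivial when $s\equiv 2\pmod 4$ but has image of order exactly $2$ (not $1$ and not $4$) when $4\mid s$ --- is deferred wholesale to an unexecuted ``local analysis at $(x-1)$.'' Without these two steps you cannot exclude, for instance, that $\pi$ fails to be surjective for some odd $s$, which would falsify the first formula. The numerology of your outline is consistent with the known answer, but as written it is a plan with the decisive step missing rather than a proof.
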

\begin{theorem}\label{theorem4}
    For any positive integer $m\ (m\ge 2)$. If the number of $m\times m$ orthogonal circulant matrices is m, then the self-dual double circulant code $\mathcal C$ over $\mathbb F_2$ of length $2m$ is not extremal. 
\end{theorem}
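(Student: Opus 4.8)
The plan is to exploit the fact that the $m$ monomials $1, x, \ldots, x^{m-1}$ always yield orthogonal circulant matrices, so that the hypothesis $|O_m| = m$ forces every orthogonal circulant to be a single power of $x$. First I would observe that for each $i$ with $0 \le i \le m-1$ one has $x^i\,\overline{x^i} = x^i x^{m-i} = x^m \equiv 1 \pmod{x^m-1}$, so by Lemma \ref{lemma99} the generator $(1, x^i)$ produces a self-dual double circulant code; equivalently, the circulant matrix whose first row is the coefficient vector of $x^i$ is a cyclic-shift (permutation) matrix, hence orthogonal. Since these $m$ polynomials are pairwise distinct, this already shows $|O_m| \ge m$ for \emph{every} $m$.

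Next I would invoke the one-to-one correspondence between $m\times m$ orthogonal circulant matrices and self-dual double circulant codes of length $2m$ noted in the text above. Under the hypothesis $|O_m| = m$, the $m$ monomials exhaust all orthogonal circulants, so every self-dual double circulant code $\mathcal C$ of length $2m$ must be generated by some $(1, x^i)$; in particular its defining polynomial $f(x) = x^i$ has $\mathrm{wt}(f(x)) = 1$.

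Finally I would conclude non-extremality from the weight of the generator. When $\mathrm{wt}(f(x)) = 1$ the codeword $(1, f(x)) = (1, x^i)$ has Hamming weight $2$, which is exactly scenario (1) in the proof of Theorem \ref{theorem3. 3}: the first column of the parity check matrix is linearly dependent with a column of $I$, so $\mathcal C$ has minimum distance $d = 2$. Because $m \ge 2$ gives length $n = 2m \ge 4$, the extremal bound in Eq.(\ref{eq1}) is at least $4\lfloor n/24\rfloor + 4 \ge 4 > 2$, so $\mathcal C$ cannot meet it and is not extremal.

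I expect the only real content to be the uniform lower bound $|O_m| \ge m$: once one recognizes that the $m$ cyclic-shift permutation matrices are themselves orthogonal circulants, the hypothesis collapses the whole family to these trivial weight-$1$ codes and the rest is immediate. A more laborious alternative would be to combine Lemma \ref{lemma3} with the recursive count for even $m$ to solve $\prod_i (2^{c_i}+1)\prod_j(2^{d_j}-1) = m$ directly, pinning down the admissible $m$ and then checking each resulting code by hand; the main obstacle on that route is controlling the Diophantine identity, which the monomial argument sidesteps entirely.
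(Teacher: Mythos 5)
Your proposal is correct and follows essentially the same route as the paper's own proof: you note that the $m$ monomials $1, x, \ldots, x^{m-1}$ satisfy $x^i\overline{x^i}\equiv 1 \pmod{x^m-1}$ and hence account for $m$ orthogonal circulants, so the hypothesis $|O_m|=m$ forces every self-dual double circulant code of length $2m$ to be generated by some $(1,x^i)$, whose weight-$2$ codeword rules out extremality via Eq.~(\ref{eq1}). No substantive difference from the paper's argument.
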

\begin{proof}
     Notice that in polynomial residue ring $R=\mathbb F_2[x]/(x^m-1)$. There are at least $m$ polynomials $f(x)$ satisfy $f(x)\overline{f(x)}\equiv 1\,  (\mathrm{mod}\,  x^m-1)$, they are $1, x, \cdots , x^{m-1}$, for $x^i\overline{x^i}=x^ix^{m-i}=x^m=1\,  (\mathrm{mod}\,  x^m-1),\,i=0,1,\cdots ,m-1$. Thus if the number of $m\times m$ orthogonal circulant matrices is exactly $m$, then the self-dual double circulant code $\mathcal C$ of length $2m$ is generated by $(1, x^i),\, 0\le i\le m-1$, which means the minimum distance of $\mathcal C$ is obvious 2. In this case there is no extremal self-dual double circulant code of length $2m$ according to Eq.(\ref{eq1}). 
\end{proof}
\begin{example}
    For $m=5,\, x^5-1=(x+1)(x^4+x^3+x^2+x+1)$. By applying Lemma \ref{lemma3}, $t=1, c_1=4/2=2$, so $| O_{5}|=2^2+1=5$. Thus there is no extremal self-dual double circulant codes of length $10$ over $\mathbb F_2$ according to Theorem \ref{theorem4}. 
\end{example}
\begin{example}
    For $m=7,\, x^7-1=(x+1)(x^3+x+1)(x^3+x^2+1)$. By applying Lemma \ref{lemma3}, $u=1, t=0, d_1=3$, so $| O_{7}|=2^3-1=7$. Thus there is no extremal self-dual double circulant codes of length $14$ over $\mathbb F_2$ according to Theorem \ref{theorem4}. 
\end{example}

\section{Bordered Double Circulant Self-Dual Codes}
The generator matrix of a bordered double circulant code $\mathcal C$ over $\mathbb F_2$ is in the form of 
    $$G'=
    \begin{pmatrix}
          &  &  &\alpha&1&1&\cdots &1\\
          &  &  &1& & & & \\
          &I_{m+1}&  &1& &A& & \\
          &  &  &\vdots& & & & \\
          &  &  &1& & & & \\
    \end{pmatrix},$$
where $\alpha \in \mathbb F_2$, let
    $$A=
    \begin{pmatrix}
        f_0&f_1&\cdots&f_{m-1}\\
        f_{m-1}&f_0&\cdots&f_{m-2}\\
        \vdots&\vdots&\ddots&\vdots\\
        f_1&f_2&\cdots&f_0
    \end{pmatrix},\ A'=
    \begin{pmatrix}
        \alpha&1&1&\cdots &1\\
        1&f_0&f_1&\cdots&f_{m-1}\\
        1&f_{m-1}&f_0&\cdots&f_{m-2}\\
        \vdots&\vdots&\vdots&\ddots&\vdots\\
        1&f_1&f_2&\cdots&f_0
    \end{pmatrix}.$$
Then $G'=(I_{m+1}, A')$. The bordered DC code is self-dual if and only if $A'$ is an orthogonal matrix. Let $f(x)=\sum_{i=0}^{m-1}f_ix^i$, then we call $A$ the matrix corresponding to $f(x)$, and we refer to $\mathcal C$ as the bordered DC code corresponding to $f(x)$. Certain elementary facts about self-dual bordered  double circulant codes are gathered in the following theorem. 
\begin{theorem}\label{theorem4. 1}
Let $\mathcal C$ be a $[2m+2, m+1]$ self-dual bordered double circulant code  corresponding to $f(x)$ over $\mathbb F_2$. Then:
\\(1) $\alpha=0$;
\\(2) m is odd;
\\(3) $\mathrm{wt}(f(x))$ is even;
\\(4) $\mathcal C$ is self-dual if and only if $AA^T=
    \begin{pmatrix}
        0&1&1&\cdots&1\\
        1&0&1&\cdots&1\\
        1&1&0&\cdots&1\\
        \vdots&\vdots&\vdots&\ddots&\vdots\\
        1&1&1&\cdots&0
    \end{pmatrix}$;
\\(5) $\mathcal C$ is self-dual if and only if $f(x)\overline{f(x)}=x+x^2+\cdots+x^{m-1}\,  (\mathrm{mod}\,  x^m-1).$
    
\end{theorem}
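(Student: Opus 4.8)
The plan is to reduce all five claims to a single Gram-matrix computation. Because $G' = (I_{m+1}, A')$ is already in standard form, Lemma~\ref{lemma1} together with $-1 = 1$ over $\mathbb F_2$ shows that $\mathcal C$ is self-dual exactly when $G'(G')^{T} = \boldsymbol 0$, that is, when
$$A'(A')^{T} = I_{m+1}.$$
So I would compute the Gram matrix $A'(A')^{T}$ entry by entry and match it against $I_{m+1}$; each family of entries yields one of the five assertions. Throughout I would use that every row of $A$ is a cyclic shift of $(f_0, \ldots, f_{m-1})$ and hence has weight $\mathrm{wt}(f(x))$, and that $\alpha^2 = \alpha$ over $\mathbb F_2$.

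Write the border row of $A'$ as $\boldsymbol b = (\alpha, 1, \ldots, 1)$ and the body rows as $(1, \boldsymbol r_i)$, where $\boldsymbol r_0, \ldots, \boldsymbol r_{m-1}$ are the rows of $A$. A body row dotted with itself gives $1 + \mathrm{wt}(f(x))$, which must equal $1$; this forces $\mathrm{wt}(f(x))$ to be even, proving (3), and simultaneously shows the diagonal of $AA^{T}$ is $0$. Two distinct body rows give $1 + \boldsymbol r_i \cdot \boldsymbol r_j$, which must vanish, so $\boldsymbol r_i \cdot \boldsymbol r_j = 1$; hence every off-diagonal entry of $AA^{T}$ equals $1$. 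The border row dotted with a body row gives $\alpha + \mathrm{wt}(f(x))$, which must vanish, and combined with (3) this yields $\alpha = 0$, proving (1). Finally $\boldsymbol b \cdot \boldsymbol b = \alpha + m$ must equal $1$, so with $\alpha = 0$ we get $m$ odd, proving (2). The diagonal-and-off-diagonal data on $AA^{T}$ is precisely the matrix identity in (4), namely that $AA^{T}$ equals the all-ones matrix with zero diagonal.

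For (5) I would translate the matrix identity of (4) into polynomial language. As recorded in the Remark after Lemma~\ref{lemma99}, the circulant $A$ realises multiplication by $f(x)$ in $R$ while $A^{T}$ realises multiplication by $\overline{f(x)}$, so $AA^{T}$ is the circulant matrix whose first-row polynomial is $f(x)\overline{f(x)} \pmod{x^m-1}$. The target matrix in (4) is also circulant, with first-row polynomial $x + x^2 + \cdots + x^{m-1}$. Since two circulant matrices agree if and only if their first rows agree, (4) holds if and only if $f(x)\overline{f(x)} = x + x^2 + \cdots + x^{m-1} \pmod{x^m-1}$, which is exactly (5).

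I do not expect a genuine obstacle: the whole argument is linear algebra over $\mathbb F_2$. The points that require attention are bookkeeping ones, namely that all cyclic shifts of $f$ carry the common weight $\mathrm{wt}(f(x))$, that $\alpha^2 = \alpha$ and $-1 = 1$, and, for (5), the identification $A^{T} \leftrightarrow \overline{f(x)}$ together with the fact that the Gram matrix of a circulant is again circulant. The only mild subtlety is logical rather than computational: parts (1)--(3) emerge as necessary conditions under the standing self-duality hypothesis, while (4) and (5) are full biconditionals, all of them falling out of reading the single identity $A'(A')^{T} = I_{m+1}$ in both directions.
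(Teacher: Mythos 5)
Your proposal is correct and follows essentially the same route as the paper: both read off parts (1)--(4) from the entries of the Gram identity $A'(A')^T=I_{m+1}$ (the paper merely packages (1) as a contradiction under $\alpha=1$ rather than deducing it after (3)), and both obtain (5) from (4) via the correspondence $A=f(B)$, $A^T=\overline{f(B)}$ with the cyclic-shift matrix $B\leftrightarrow x$.
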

\begin{proof}
    If $\alpha=1$. Since $A'$ is an orthogonal matrix, the inner product of the first row of $A'$ with any other row of $A'$ is equal to 0, which implies that $\mathrm{wt}(f(x))$ is odd. However, the inner product of the second row of $A'$ with itself is equal to 1, indicating that $\mathrm{wt}(f(x))$ is even, which is a contradiction. Therefore, if the bordered double circulant code is self-dual. Then $\alpha$ must be 0. 
    
    Since the inner product of the first row of $A'$ with itself is 1, so $m$ is odd. The inner product of the first row  and the second row of $A'$ is 0, so $\mathrm{wt}(f(x))$ is even, proving (2) and (3). 
    
    For (4). Since $\mathcal C$ is self-dual, thus $A'A'^T=I_{m+1}$, yielding $$
    \begin{pmatrix}
        0&\bold 1\\
        \bold 1^T&A
    \end{pmatrix}
    \begin{pmatrix}
        0&\bold 1\\
        \bold 1^T&A^T
    \end{pmatrix}
    =I_{m+1},
    $$
    where $\bold 1$ is a vector of length $m$ with each element equal to 1. Therefore, 
    $$\begin{pmatrix}
        1&\boldsymbol{{f_0+f_1+\cdots +f_{m-1}}}\\
        (\boldsymbol{{f_0+f_1+\cdots +f_{m-1}}})^T&E_m+AA^T
    \end{pmatrix}=I_{m+1},$$
    where $\boldsymbol{{f_0+f_1+\cdots +f_{m-1}}}$ is a vector of length $m$ with each element equal to $f_0+f_1+\cdots +f_{m-1}$ and $E_m$ is an $m\times m$ all-ones matrix. Then $E_m+AA^T=I_m$ implies (4), and the inverse is obvious. 
    
    For (5), it suffices to prove the equivalence of $AA^T=E_m-I_m$ and $f(x)\overline{f(x)}=x+x^2+\cdots+x^{m-1}$.
    Let $$A=
    \begin{pmatrix}
        f_0&f_1&\cdots&f_{m-1}\\
        f_{m-1}&f_0&\cdots&f_{m-2}\\
        \vdots&\vdots&\ddots&\vdots\\
        f_1&f_2&\cdots&f_0
    \end{pmatrix},A^T=
    \begin{pmatrix}
        f_0&f_{m-1}&\cdots&f_1\\
        f_1&f_0&\cdots&f_2\\
        \vdots&\vdots&\ddots&\vdots\\
        f_{m-1}&f_{m-2}&\cdots&f_0
    \end{pmatrix}.$$
    Let $$B=\begin{pmatrix}
        0&1&0&\cdots&0\\
        0&0&1&\cdots&0\\
        \vdots&\vdots&\vdots&\ddots&\vdots\\
        0&0&0&\cdots&1\\
        1&0&0&\cdots&0\\
    \end{pmatrix}_{m\times m},B^2=\begin{pmatrix}
        0&0&1&0&\cdots&0\\
        0&0&0&1&\cdots&0\\
        \vdots&\vdots&\vdots&\vdots&\ddots&\vdots\\
        0&0&0&0&\cdots&1\\
        1&0&0&0&\cdots&0\\
        0&1&0&0&\cdots&0\\
    \end{pmatrix},\cdots,B^{m-1}=\begin{pmatrix}
        0&0&\cdots&0&1\\
        1&0&\cdots&0&0\\
        0&1&\cdots&0&0\\
        \vdots&\vdots&\ddots&\vdots&\vdots\\
        0&0&\cdots&1&0\\
        \end{pmatrix}.$$
       Then $A=f
        _0I_m+f_1B+f_2B^2+\cdots+f_{m-1}B^{m-1}=f(B),\,A^T=f
        _0I_m+f_{m-1}B+f_{m-2}B^2+\cdots+f_1B^{m-1}=\overline{f(B)}$. According to (4), $\mathcal C$ is self-dual if and only if $AA^T=B+B^2+\cdots+B^{m-1}$. 
        Since $B^m = I_m$, there naturally exists a one-to-one mapping between the circulant matrix $B$ and $x$ in $R$. Then $AA^T=B+B^2+\cdots+B^{m-1}$ equivalent to $f(x)\overline{f(x)}=x+x^2+\cdots+x^{m-1}\,  (\mathrm{mod}\,  x^m-1)$, which completes the proof.
\end{proof}

Similar to the discussion of double circulant codes, we consider the self-duality of bordered double circulant codes. 
\begin{theorem}
    If a bordered DC code with generator matrix corresponding to $f(x)$ over $\mathbb F_2$ is self-dual, then the bordered DC code corresponding to $f^*(x)$ is also self-dual, where $f^*(x)$ is the reciprocal polynomial of $f(x)$.
\end{theorem}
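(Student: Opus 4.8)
The plan is to reduce the claim to the polynomial self-duality criterion established in Theorem \ref{theorem4. 1}(5). By that result, the bordered DC code corresponding to $f(x)$ is self-dual precisely when
$$f(x)\overline{f(x)} = x + x^2 + \cdots + x^{m-1} \pmod{x^m - 1},$$
and, applying the same criterion to $f^*(x)$, the code corresponding to $f^*(x)$ is self-dual precisely when $f^*(x)\overline{f^*(x)}$ equals the very same fixed polynomial. Hence the entire statement collapses to proving the single invariance identity $f^*(x)\overline{f^*(x)} \equiv f(x)\overline{f(x)} \pmod{x^m-1}$.

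First I would record how the bar and star operations interact. Writing $d = \deg f(x)$, we have $f^*(x) = x^d f(\tfrac{1}{x})$ and $\overline{g(x)} = x^m g(\tfrac{1}{x}) \pmod{x^m-1}$ for any $g$. Substituting $g = f^*$ gives $\overline{f^*(x)} = x^m f^*(\tfrac{1}{x}) = x^m \cdot x^{-d} f(x) = x^{m-d} f(x)$. Multiplying the two factors, the powers of $x$ recombine as $x^d \cdot x^{m-d} = x^m$, so $f^*(x)\overline{f^*(x)} = x^m f(\tfrac{1}{x}) f(x) = f(x)\overline{f(x)} \pmod{x^m-1}$. This is exactly the computation already carried out in the proof of Theorem \ref{theorem2}, where the same product was shown to be unchanged under $f \mapsto f^*$; I would simply invoke that identity rather than repeat the algebra.

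Combining the two observations finishes the argument: if the $f(x)$-code is self-dual, then $f(x)\overline{f(x)} = x + x^2 + \cdots + x^{m-1}$ by Theorem \ref{theorem4. 1}(5), whence $f^*(x)\overline{f^*(x)}$ equals the same polynomial by the invariance identity, and a second application of Theorem \ref{theorem4. 1}(5) certifies that the $f^*(x)$-code is self-dual. There is essentially no genuine obstacle here: the whole content is the remark that the quantity $f(x)\overline{f(x)}$ is insensitive to replacing $f$ by its reciprocal, parallel to the pure DC situation of Theorem \ref{theorem2}. The only point that warrants a line of care is the degree bookkeeping in $\overline{f^*(x)} = x^{m-d} f(x)$, since a miscounted exponent would spoil the cancellation $x^d\cdot x^{m-d}=x^m$; once that step is verified, the conclusion is immediate.
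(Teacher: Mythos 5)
Your proposal is correct and follows essentially the same route as the paper: both reduce the claim to the criterion of Theorem \ref{theorem4. 1}(5) and then verify the invariance $f^*(x)\overline{f^*(x)}=x^{\deg f}f(\tfrac1x)\cdot x^{m-\deg f}f(x)=f(x)\overline{f(x)}\pmod{x^m-1}$, which is the identical exponent cancellation used in the paper's one-line proof (and already present in the proof of Theorem \ref{theorem2}).
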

\begin{proof}
    The conclusion can be directly obtained from  $$f^*(x)\overline{f^*(x)}=x^{deg(f(x))} f(\frac{1}{x})x^mx^{-deg(f(x))}f(x)=f(x)x^mf(\frac{1}{x})=f(x)\overline{f(x)}=x+x^2+\cdots+x^{m-1}\,  (\mathrm{mod}\,  x^m-1),$$ according to Theorem \ref{theorem4. 1} (5).
\end{proof}

\begin{theorem}
    Let $\mathcal C$ be a $[2m+2, m+1]$  bordered double circulant code over $\mathbb F_2$ with the generator matrix corresponding to $x+x^2+\cdots +x^{m-1}$. Then $\mathcal C$ is self-dual. Moreover, $\mathcal C$ is extremal when $m\le 9$, but not extremal when $m> 9$.
\end{theorem}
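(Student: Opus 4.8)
The plan is to treat the two assertions separately: self-duality via the polynomial criterion in Theorem~\ref{theorem4. 1}(5), and extremality via a direct computation of the minimum distance. Throughout I take $m$ odd with $m\ge 3$, since self-duality of a bordered DC code already forces $m$ odd by Theorem~\ref{theorem4. 1}(2), and $m\ge 3$ guarantees the circulant block contains at least two rows.

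For self-duality, set $f(x)=x+x^2+\cdots+x^{m-1}$ and work in $R=\mathbb F_2[x]/(x^m-1)$. First I would note that $f$ is fixed by the bar operation: since $\overline{f(x)}=x^m f(1/x)=x^{m-1}+x^{m-2}+\cdots+x$, we get $\overline{f(x)}=f(x)$. Next, because squaring is additive in characteristic $2$, $f(x)^2=\sum_{i=1}^{m-1}x^{2i}$. As $m$ is odd, multiplication by $2$ permutes the nonzero residues modulo $m$, so the reduced exponents $\{2i \bmod m : 1\le i\le m-1\}$ run through $\{1,2,\dots,m-1\}$, giving $f(x)^2=x+x^2+\cdots+x^{m-1}$. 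Hence $f(x)\overline{f(x)}=f(x)^2=x+x^2+\cdots+x^{m-1}\pmod{x^m-1}$, which is exactly the self-duality condition of Theorem~\ref{theorem4. 1}(5).

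For the minimum distance I would write a codeword as $vG'=(v,\,vA')$ with $v\in\mathbb F_2^{m+1}$, so $\mathrm{wt}(vG')=\mathrm{wt}(v)+\mathrm{wt}(vA')$, and recall that a binary self-dual code has only even weights. The key structural remark is that the circulant block $A$ (first row $(0,1,\dots,1)$) equals $E_m+I_m$ over $\mathbb F_2$, so every row of $A$ has weight $m-1$ and every row of $A'$ has weight $m$. Since $A'$ is invertible, a weight-$2$ codeword would require $\mathrm{wt}(v)=\mathrm{wt}(vA')=1$, i.e. a weight-$1$ row of $A'$, which is impossible for $m\ge 3$; thus $d\ge 4$. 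To get $d=4$, take $v=e_i+e_j$ with $1\le i\neq j\le m$: the two leading entries of the corresponding rows of $A'$ cancel, while the circulant rows $e_{i-1}+\mathbf 1$ and $e_{j-1}+\mathbf 1$ sum to $e_{i-1}+e_{j-1}$ of weight $2$, so $\mathrm{wt}(vA')=2$ and $\mathrm{wt}(vG')=4$. Hence $d=4$ for every odd $m\ge 3$.

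Finally I would compare $d=4$ with the bound~(\ref{eq1}). Since $n=2m+2\equiv 0\pmod 4$, we have $n\not\equiv 22\pmod{24}$, so the bound reads $d\le 4\lfloor n/24\rfloor+4$. For $m\le 9$ we have $n\le 20<24$, the bound equals $4$ and is attained, so $\mathcal C$ is extremal; for $m>9$ (so $m\ge 11$, $n\ge 24$) the bound is at least $8>4$, so $\mathcal C$ is not extremal. The main obstacle is the minimum-distance step: one must argue carefully from the bordered structure that no weight-$2$ codeword exists while a weight-$4$ codeword always does, because it is precisely the fact that $d$ remains pinned at $4$ for all $m$—in contrast with the jump of the extremal threshold from $4$ to $8$ at $n=24$—that yields the claimed dichotomy.
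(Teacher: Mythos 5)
Your proposal is correct and follows essentially the same route as the paper: verify the polynomial self-duality criterion of Theorem~\ref{theorem4. 1}(5) for $f(x)=x+x^2+\cdots+x^{m-1}$, show the minimum distance is exactly $4$ by exhibiting the sum of two generator rows as a weight-$4$ codeword while ruling out weight $2$, and compare with the bound~(\ref{eq1}), which equals $4$ for $n\le 20$ and jumps to at least $8$ for $n\ge 24$. You merely supply details the paper leaves implicit, namely the Frobenius computation $f(x)\overline{f(x)}=f(x)^2=x+\cdots+x^{m-1}$ using that doubling permutes the nonzero residues mod odd $m$, and the explicit description $A=E_m+I_m$, whereas the paper phrases the weight-$2$ exclusion via column dependences of the parity-check matrix $(A',I_{m+1})$; these are equivalent arguments.
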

\begin{proof}
    Let $f(x)=x+x^2+\cdots +x^{m-1}$. It is easily to verify that $f(x)\overline{f(x)}=x+x^2+\cdots+x^{m-1}\,  (\mathrm{mod}\,  x^m-1)$. Thus $\mathcal C$ is self-dual according to Theorem \ref{theorem4. 1}. When $m\le 9$, the minimum distance for $\mathcal C$ to be extremal is 4. Since $\mathcal C$ contains only even weight codewords, it is sufficient to prove there is no codeword of weight 2, and there is at least one codeword of weight 4 in $\mathcal C$. According to Lemma \ref{lemma1}, the parity check matrix of $\mathcal C$ is given by 
    $$H=\begin{pmatrix}
        A'^T&I_{m+1}
    \end{pmatrix}
    =
    \begin{pmatrix}
        A'&I_{m+1}
    \end{pmatrix}.$$
    Since both $A'$ and $I_{m+1}$ are orthogonal matrices. We only need to consider linearly dependent columns between them. It is obvious there is no codeword of weight 2 by observing the structure of the two matrices. Furthermore, adding any two rows of the generator matrix together results in a codeword weighing 4, which means $\mathcal C$ is extremal when length less than 20. However, for $m> 9$, the minimum distance for $\mathcal C$ to be extremal is greater than 4. Then the results are clear.
\end{proof}

\begin{definition}
    Let $f(x)=f_0+f_1x+\cdots+f_{m-1}x^{m-1}\in R$. Define the complement polynomial of \( f(x) \) as $$\widehat{f}(x)=(f_0+1)+(f_1+1)x+\cdots+(f_{m-1}+1)x^{m-1}.$$
\end{definition}

\begin{theorem}\label{theorem4.5}
    Let $f(x)=f_0+f_1x+\cdots+f_{m-1}x^{m-1}\in R$. When $m$ is odd. If $f(x)\overline{f(x)}=1\ (\mathrm{mod}\ x^m-1)$. Then $\widehat{f}(x)\overline{\widehat{f}(x)}=x+x^2+\cdots+x^{m-1}\ (\mathrm{mod}\ x^m-1)$.
\end{theorem}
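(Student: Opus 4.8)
The plan is to reduce everything to the all-ones polynomial $J(x) := 1 + x + x^2 + \cdots + x^{m-1}$. Over $\mathbb F_2$ the complement polynomial is simply $\widehat{f}(x) = J(x) + f(x)$, since complementing each coefficient adds $1$ to it. First I would record two facts about $J(x)$ modulo $x^m-1$. On one hand, $\overline{J(x)} = J(x)$, because the coefficient vector of $J$ is all-ones and hence invariant under the reflection defining the bar operation; consequently $\overline{\widehat{f}(x)} = \overline{J(x)} + \overline{f(x)} = J(x) + \overline{f(x)}$. On the other hand, from the factorization $(x+1)J(x) = x^m+1 \equiv 0 \pmod{x^m-1}$ over $\mathbb F_2$ I get $xJ(x) \equiv J(x)$, and more generally $x^iJ(x) \equiv J(x)$ for every $i$. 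This yields the central reduction rule: for any $g(x)\in R$,
\[
g(x)J(x) \equiv g(1)\,J(x) \pmod{x^m-1},
\]
where $g(1) \equiv \mathrm{wt}(g) \pmod 2$ is the parity of the weight of $g$.

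With these tools the computation becomes routine. I would expand
\[
\widehat{f}(x)\,\overline{\widehat{f}(x)} = \big(J(x)+f(x)\big)\big(J(x)+\overline{f(x)}\big) = J(x)^2 + f(x)J(x) + \overline{f(x)}J(x) + f(x)\overline{f(x)}
\]
and evaluate the four terms modulo $x^m-1$ one at a time using the reduction rule. The last term is $f(x)\overline{f(x)} \equiv 1$ by hypothesis. For $J(x)^2 \equiv J(1)J(x)$ I use that $J(1) = m \equiv 1 \pmod 2$ since $m$ is odd, so $J(x)^2 \equiv J(x)$. For the two mixed terms I need $f(1) = \overline{f}(1) = 1$: evaluating the hypothesis $f(x)\overline{f(x)} \equiv 1 \pmod{x^m-1}$ at $x=1$ (legitimate since $x^m-1$ vanishes there) gives $f(1)^2 = 1$, hence $f(1) = 1$ in $\mathbb F_2$, consistent with the earlier observation that a self-dual generator $f(x)$ has odd weight; and $\overline{f}(1) = f(1) = 1$ because the bar operation only permutes coefficients. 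Therefore $f(x)J(x) \equiv \overline{f(x)}J(x) \equiv J(x)$.

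Collecting the four contributions gives $\widehat{f}(x)\overline{\widehat{f}(x)} \equiv J(x) + J(x) + J(x) + 1 = 3J(x) + 1 \equiv J(x) + 1 \pmod{x^m-1}$ over $\mathbb F_2$, and $J(x) + 1 = x + x^2 + \cdots + x^{m-1}$ because the two constant terms cancel. This is exactly the claimed identity. The argument is essentially bookkeeping once the reformulation $\widehat{f} = J + f$ is in hand; the only places that genuinely use the hypotheses are the appeal to $m$ odd to force $J(1) = 1$ and the extraction of $f(1) = 1$ from the self-duality relation. Accordingly, I would expect the main (and rather minor) obstacle to lie in verifying the reduction rule $x^iJ(x) \equiv J(x)$ and the self-conjugacy $\overline{J(x)} = J(x)$ carefully, after which no further difficulty arises.
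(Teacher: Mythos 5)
Your proof is correct. It differs from the paper's in organization: the paper works coefficient by coefficient, writing out the autocorrelation system $\sum_i f_if_{i+k}=\delta_{k,0}$ implied by $f(x)\overline{f(x)}=1$ and then computing $\sum_i(f_i+1)(f_{i+k}+1)=\sum_i f_if_{i+k}+2\sum_i f_i+m$ for each shift $k$, using $m$ odd to conclude that the constant coefficient is $0$ and every other coefficient is $1$. You instead package the complement as $\widehat{f}=f+J$ with $J(x)=1+x+\cdots+x^{m-1}$, prove the absorption rule $g(x)J(x)\equiv g(1)J(x)\pmod{x^m-1}$ together with $\overline{J}=J$, and expand $(f+J)(\overline{f}+J)$ as four terms; the hypotheses enter only through $J(1)=m\equiv 1$ and $f(1)=\overline{f}(1)=1$ (the latter extracted by evaluating the self-duality relation at $x=1$). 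The two arguments carry out the same underlying expansion --- your cross terms $fJ$ and $\overline{f}J$ are exactly the paper's $\sum_i f_i$ sums, and $J^2$ is its $\sum_i 1=m$ --- but yours isolates the structural reason the identity holds (the all-ones polynomial generates an ideal on which multiplication acts by evaluation at $1$), which makes the computation shorter, less error-prone, and immediately suggests how the statement would generalize. All steps check out, including the legitimacy of evaluating a congruence modulo $x^m-1$ at $x=1$.
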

\begin{proof}
    From $f(x)\overline{f(x)}=1\ (\mathrm{mod}\ x^m-1)$, we can deduce that:
\begin{flalign*}
    &&
    \left\{
\begin{aligned}
    &f_0^2 + f_1^2 + \cdots + f_{m-1}^2 = 1, \\
    &f_0f_{m-1} + f_1f_0 + \cdots + f_{m-1}f_{m-2} = 0, \\[6pt]
    &\makebox[10em]{\hfill\text{$\cdots$}\hfill} \\[6pt]
    &f_0f_2 + f_1f_3 + \cdots + f_{m-1}f_1 = 0, \\
    &f_0f_1 + f_1f_2 + \cdots + f_{m-1}f_0 = 0.
\end{aligned}
\right.
    &&
\end{flalign*}
    Since $\widehat{f}(x)=(f_0+1)+(f_1+1)x+\cdots+(f_{m-1}+1)x^{m-1}$, $\overline{\widehat{f}(x)}=(f_0+1)+(f_{m-1}+1)x+\cdots+(f_1+1)x^{m-1}$ and $m$ is odd. Therefore, 
    
\begin{flalign}\label{eq.5}
    &&
    \left\{
\begin{aligned}
    \displaystyle \sum_{i=0}^{m-1} (f_i + 1)^2 
    &= \sum_{i=0}^{m-1} f_i^2 + m = 0, \\[10pt]
    \displaystyle \sum_{i=0}^{m-1} (f_i+1)(f_{i-1}+1) 
    &= \sum_{i=0}^{m-1} f_i f_{i-1} + 2 \sum_{i=0}^{m-1} f_i + m = 1, \\[10pt]
    &\cdots \\[10pt]
    \displaystyle \sum_{i=0}^{m-1} (f_i+1) (f_{i+2}+1) &=\sum_{i=0}^{m-1} f_i f_{i+2}+2 \sum_{i=0}^{m-1} f_i+m= 1, \\[8pt]
    \displaystyle \sum_{i=0}^{m-1} (f_i+1) (f_{i+1}+1) &=\sum_{i=0}^{m-1} f_i f_{i+1}+ 2 \sum_{i=0}^{m-1} f_i+m= 1.
\end{aligned}\right.
    &&
    \end{flalign}
Eq.(\ref{eq.5}) indicates that $\widehat{f}(x)\overline{\widehat{f}(x)}=x+x^2+\cdots+x^{m-1}\ (\mathrm{mod}\ x^m-1)$. The result is clear now.
\end{proof}

\begin{corollary}
    When m is odd. There is a one-to-one correspondence between $[2m,m]_2$ self-dual double circulant codes and $[2m+2,m+1]_2$ self-dual bordered double circulant codes.
\end{corollary}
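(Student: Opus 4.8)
The plan is to exhibit the correspondence explicitly through the complement map $f(x)\mapsto\widehat{f}(x)$ of the preceding definition and to show that it interchanges the two parametrizing sets. By Lemma \ref{lemma99} the $[2m,m]_2$ self-dual double circulant codes are in bijection with $S_1=\{f(x)\in R : f(x)\overline{f(x)}\equiv 1 \pmod{x^m-1}\}$, each $f$ giving the generator $(1,f(x))$; by Theorem \ref{theorem4. 1} (parts (1) and (5)) the $[2m+2,m+1]_2$ self-dual bordered double circulant codes are in bijection with $S_2=\{f(x)\in R : f(x)\overline{f(x)}\equiv x+x^2+\cdots+x^{m-1} \pmod{x^m-1}\}$, since self-duality forces $\alpha=0$ and then the code is determined by $f$. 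It therefore suffices to prove that $f\mapsto\widehat{f}$ restricts to a bijection $S_1\to S_2$.

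First I would record that the complement is an involution. Writing $\widehat{f}(x)=f(x)+J(x)$ with $J(x)=1+x+\cdots+x^{m-1}$, one has $\widehat{\widehat{f}}(x)=f(x)+2J(x)=f(x)$ over $\mathbb F_2$, so the map is its own inverse and in particular injective. Theorem \ref{theorem4.5} already supplies one half of the claim: for $m$ odd, $f\in S_1$ implies $\widehat{f}\in S_2$. The only remaining point is surjectivity onto $S_2$, equivalently the converse implication $g\in S_2\Rightarrow\widehat{g}\in S_1$.

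The cleanest route to that converse -- and in fact to Theorem \ref{theorem4.5} itself -- is the single identity
$$\widehat{f}(x)\,\overline{\widehat{f}(x)}\equiv f(x)\overline{f(x)}+J(x)\pmod{x^m-1},$$
valid for all $f$ when $m$ is odd. I would derive it from three elementary facts in $R$: (i) $\overline{J(x)}=J(x)$, since reversing the all-ones polynomial fixes it; (ii) $J(x)h(x)\equiv h(1)J(x)\pmod{x^m-1}$ for every $h$, because $x^iJ(x)=J(x)$; and (iii) $J(x)^2=m\,J(x)=J(x)$ for $m$ odd. Expanding $(f+J)(\overline{f}+J)=f\overline{f}+fJ+J\overline{f}+J^2$ and applying (i)--(iii) collapses the cross terms $fJ+J\overline{f}=2f(1)J=0$ and turns $J^2$ into $J$. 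Since $x+x^2+\cdots+x^{m-1}=J(x)+1$ over $\mathbb F_2$, the identity delivers both directions simultaneously: $f\overline{f}\equiv 1$ gives $\widehat{f}\,\overline{\widehat{f}}\equiv 1+J=x+\cdots+x^{m-1}$, while $g\overline{g}\equiv x+\cdots+x^{m-1}=J+1$ gives $\widehat{g}\,\overline{\widehat{g}}\equiv (J+1)+J=1$. Thus $\widehat{(\cdot)}$ sends $S_1$ into $S_2$ and $S_2$ into $S_1$, and being an involution it is a bijection $S_1\leftrightarrow S_2$.

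Finally I would transport the bijection to codes: composing the parametrizations $\mathcal C_f\leftrightarrow f$ on $S_1$ and $\mathcal C'_g\leftrightarrow g$ on $S_2$ with $f\mapsto\widehat{f}$ yields the desired one-to-one correspondence $\mathcal C_f\leftrightarrow\mathcal C'_{\widehat{f}}$. The main obstacle is essentially just the converse direction (surjectivity); if one preferred to stay at the level of the coefficient computation used for Theorem \ref{theorem4.5} rather than invoke the identity above, one would instead re-run that computation starting from the $m$ equations encoding $g\overline{g}\equiv x+\cdots+x^{m-1}$ and verify that the complemented system reduces to $\widehat{g}\,\overline{\widehat{g}}\equiv 1$ -- routine, but less transparent than the one-line identity, which is why I would favour the latter.
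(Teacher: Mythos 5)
Your argument is correct, and it is in fact more complete than the proof the paper gives. The paper's proof of this corollary only exhibits the forward map: it invokes Theorem \ref{theorem4.5} to say that if $f(x)\overline{f(x)}\equiv 1$ then the bordered code built from $\widehat{f}(x)$ is self-dual, and stops there; neither injectivity nor surjectivity of the correspondence is addressed. You correctly identify surjectivity (equivalently, the converse implication $g\overline{g}\equiv x+\cdots+x^{m-1}\Rightarrow\widehat{g}\,\overline{\widehat{g}}\equiv 1$) as the missing half, and you supply it via the identity
$$\widehat{f}(x)\,\overline{\widehat{f}(x)}\equiv f(x)\overline{f(x)}+J(x)\pmod{x^m-1},\qquad J(x)=1+x+\cdots+x^{m-1},$$
which checks out: $\overline{J}=J$, $hJ\equiv h(1)J$, $J^2\equiv mJ=J$ for $m$ odd, and the cross terms vanish because $\overline{f}(1)=f(1)$ so $fJ+J\overline{f}=2f(1)J=0$ over $\mathbb F_2$. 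Combined with the observation that complementation is an involution and that a (bordered) double circulant code determines its polynomial uniquely via the standard-form generator matrix, this gives a genuine bijection $S_1\leftrightarrow S_2$. Your ring-level identity also subsumes the coefficient-by-coefficient computation in the paper's Theorem \ref{theorem4.5} in one line and yields both directions at once, whereas the paper's route gives only one direction and would have to re-run the coefficient argument to get the other. In short: same underlying idea (the complement map), but your version actually proves the stated one-to-one correspondence.
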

\begin{proof}
    From Theorem \ref{theorem4.5}, if \( A \) is the circulant matrix corresponding to \( f(x) \) and \( f(x)\overline{f(x)} = 1 \ (\mathrm{mod}\ x^m-1)\). Then the double circulant code with the generator matrix \( (I, A) \) is self-dual. Let \( \widehat{A} \) be the circulant matrix corresponding to \( \widehat{f}(x) \). Then the  bordered double circulant code with \( B\) as the generator matrix is self-dual, where $$B=\begin{pmatrix}
          &  &  &0&1&1&\cdots &1\\
          &  &  &1& & & & \\
          &I_{m+1}&  &1& &\widehat{A}& & \\
          &  &  &\vdots& & & & \\
          &  &  &1& & & & \\
    \end{pmatrix}.$$
\end{proof}

\begin{theorem}
    Let $\mathcal C$ be the self-dual double circulant code corresponding to $f(x)$ for the length \( n \leq 18 \) over $\mathbb F_2$. If $\mathcal C$ is extremal. Then the self-dual bordered double circulant code $\widehat{\mathcal C}$ corresponding to \( \widehat{f}(x) \) is also extremal. 
\end{theorem}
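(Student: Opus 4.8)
The plan is to reduce everything to a single statement about the minimum distance of $\widehat{\mathcal C}$. Since $m$ must be odd for $\widehat{\mathcal C}$ to be self-dual (Theorem \ref{theorem4. 1}(2)), the hypothesis places us exactly in the situation of Theorem \ref{theorem4.5} and the preceding corollary, so $\widehat{\mathcal C}$ is already known to be a self-dual $[2m+2,m+1]$ code; only extremality remains. Both $\mathcal C$ and $\widehat{\mathcal C}$ are binary self-dual, hence contain only even-weight codewords, and their lengths $2m\le 18$ and $2m+2\le 20$ lie below $24$, so by Eq.(\ref{eq1}) extremality of each is equivalent to minimum distance exactly $4$. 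As $\mathcal C$ is extremal we have $d(\mathcal C)=4$, and it suffices to prove that $\widehat{\mathcal C}$ has no codeword of weight $2$ and does have a codeword of weight $4$.

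First I would parametrize the codewords of $\widehat{\mathcal C}$. With generator matrix $(I_{m+1},A')$ and $\alpha=0$, a codeword is $(u_0,u',\,u'\cdot\mathbf 1,\,u_0\mathbf 1+u'\widehat A)$ for $u=(u_0,u')\in\mathbb F_2\times\mathbb F_2^{m}$. Identifying $u'$ with $u'(x)\in R$ and $\widehat A$ with $\widehat f(x)$, and using $\widehat f(x)=J(x)+f(x)$ with $J(x)=1+x+\cdots+x^{m-1}$ together with $x^iJ(x)=J(x)$ (so $u'(x)J(x)=sJ(x)$, where $s=\mathrm{wt}(u')\bmod 2$), the codeword becomes
$$\left(u_0,\;u'(x),\;s,\;(u_0+s)J(x)+u'(x)f(x)\right).$$
The crucial observation is that when $u_0=s=0$ this reduces to $(0,u'(x),0,u'(x)f(x))$, whose nonzero part $(u'(x),u'(x)f(x))$ is precisely a codeword of $\mathcal C$; hence its weight is at least $d(\mathcal C)=4$ whenever $u'\neq 0$.

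For the absence of weight-$2$ codewords I would run through the four cases for $(u_0,s)$. When $u_0=s$ the last block equals $u'(x)f(x)$ and the total weight is $\mathrm{wt}(u',u'f)$ (if $u_0=s=0$) or $2+\mathrm{wt}(u',u'f)$ (if $u_0=s=1$); in both cases a nonzero codeword forces $u'\neq 0$, giving weight at least $4$ by Lemma \ref{lemma99} and $d(\mathcal C)=4$. When $u_0\neq s$ the last block equals $J(x)+u'(x)f(x)$, and weight $2$ would force, via Lemma \ref{lemma15}, either $u'(x)f(x)=J(x)$, i.e.\ $f=J$ (impossible, since $J\overline J=J^2=J\neq 1$ contradicts self-duality as $\gcd(2,m)=1$), or $u'=0$ with $\mathrm{wt}(J)=m=1$ (excluded because $m\ge 3$). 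Thus no weight-$2$ codeword exists.

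The main obstacle is the existence of a weight-$4$ codeword: the naive lift of a weight-$4$ codeword $(r,rf)$ of $\mathcal C$ lands in the clean $u_0=s=0$ branch only when $\mathrm{wt}(r)$ is even, and a priori the minimal codewords of $\mathcal C$ could all have odd $r$-part. I would dissolve this obstacle by first showing that for $m$ odd there is \emph{no} self-dual DC code with $\mathrm{wt}(f)=3$: writing $\mathrm{supp}(f)=\{s_1,s_2,s_3\}$, the identity $f\overline f=1$ forces every nonzero difference $s_i-s_j$ to occur an even number of times, but a parity count of the six nonzero differences, combined with the relation $d_3=d_1+d_2$ among the three positive differences, always leaves some nonzero class with odd multiplicity when $m$ is odd. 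Consequently, by Theorem \ref{theorem3. 3}, the extremality of $\mathcal C$ must arise from the second alternative, namely $f(x)+x^if(x)\equiv x^j+x^k\pmod{x^m-1}$ for some $1\le i\le m-1$ and $j\neq k$. Taking $u'(x)=1+x^i$ (which has even weight, so $s=0$) and $u_0=0$ then yields the codeword $(0,\,1+x^i,\,0,\,x^j+x^k)$ of weight $4$ in $\widehat{\mathcal C}$. Combined with the previous paragraph this gives $d(\widehat{\mathcal C})=4$, so $\widehat{\mathcal C}$ is extremal, as required.
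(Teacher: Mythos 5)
Your proof is correct, and the conclusion matches the paper's, but you reach it by a noticeably different and in places more careful route. The paper works on the parity-check side: it writes down $\widehat H$, asserts that ``no two columns are linearly dependent'' by inspection, and produces a weight-$4$ codeword from a dependence among four columns using $\mathrm{wt}(\widehat f(x)+x^i\widehat f(x))=\mathrm{wt}(f(x)+x^if(x))=2$. You work on the generator side, parametrizing every codeword as $(u_0,u'(x),s,(u_0+s)J(x)+u'(x)f(x))$ with $J(x)=1+x+\cdots+x^{m-1}$; this turns the ``no weight-$2$ codeword'' step into a clean reduction to $d(\mathcal C)=4$ together with the observation $f\neq J$, instead of an appeal to what is ``evident'' from the matrix. (In fact, since Eq.(\ref{eq1}) already gives $d(\widehat{\mathcal C})\le 4$ and all weights are even, ruling out weight $2$ alone would finish the proof; the explicit weight-$4$ word is redundant in both arguments.) The more substantive difference is your treatment of the $\mathrm{wt}(f(x))=3$ branch of Theorem \ref{theorem3. 3}: the paper dismisses it with a remark that Theorem \ref{theorem4.5} forces even weight, which concerns $\widehat f$ rather than $f$ and does not actually exclude $\mathrm{wt}(f)=3$; your parity count --- the six nonzero differences of $\mathrm{supp}(f)$ pair off under negation, so for odd $m$ they fall into classes whose multiplicities sum to $3$ and hence cannot all be even --- genuinely closes this gap and guarantees the relation $f(x)+x^if(x)\equiv x^j+x^k$ needed for the even-weight lift $u'=1+x^i$.
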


\begin{proof}
    From Theorem \ref{theorem3. 3}, a double circulant code over \( \mathbb{F}_2 \) with  length \( n \leq 18 \) is extremal if and only if $\mathrm{wt}(f(x)) = 3$ or $\mathrm{wt}(f(x))> 3$ and there exist \(i, j, k\) such that $f(x)+x^if(x) \equiv x^j+x^k\,  (\mathrm{mod}\,  x^m-1), 1\le i \le m-1, 0\le j\neq k \le m-1$. We only need to consider the latter case, i.e. $\mathrm{wt}(f(x)+x^if(x))=2$, as Theorem \ref{theorem4.5} implies that a self-dual bordered double circulant code can only be obtained when \( \mathrm{wt}(f(x)) \) is even. The parity check matrix of $\widehat{\mathcal C}$ is $$\widehat{H}=\begin{pmatrix}
          0&1&1&\cdots &1&  &  &\\
          1& & & & &  &  &\\
          1&[\widehat{f}(x)]^T&[x\widehat{f}(x)]^T&\cdots&[x^{m-1}\widehat{f}(x)]^T& &I_{m+1}& \\
          \vdots& & & & &  &  &\\
          1& & & & &  &  &\\
    \end{pmatrix},$$
    where $[\widehat{f}(x)]^T$ denotes the transpose of the coefficient vector of $\widehat{f}(x)$, and the rest entries are expressed similarly. From the structure of the parity check matrix, it is evident that no two columns are linearly dependent, meaning there are no codewords of weight 2. Moreover, since $\mathrm{wt}(f(x) + x^i f(x)) = \mathrm{wt}(\widehat{f}(x) + x^i \widehat{f}(x))=2,\  1\le i \le m-1$. When \( \mathrm{wt}(f(x)) > 3 \), it holds that  there must exist four linearly dependent columns in the parity check matrix of $\widehat{\mathcal C}$, which implies the existence of a codeword with weight 4. Therefore, $\widehat{\mathcal C}$ is extremal of length up to 20, completing the proof.
\end{proof}

\section{Double Circulant Complementary Dual Codes}
Linear complementary dual codes intersect trivially with their dual, i.e. $\mathcal C$ is an LCD code if and only if $\mathcal C\cap \mathcal C^{\bot} =\{0\}$, where $\bot$ represents Euclidean dual in this paper. In \cite{ref23}, Guan et al.  studied one-generator quasi-cyclic (QC) codes with Euclidean, Hermitian and symplectic complementary duals from codeword level. Since DC codes are a special case of one-generator QC codes, the conclusions in \cite{ref23} can be directly applied to DC codes. Building on this, the paper also provides the conditions for bordered DC codes to be LCD under the Euclidean inner product. Let’s start by introducing a few lemmas. Note the Euclidean inner product of $\textbf{\emph{x}}=(x_0, x_1, \cdots,x_{n-1}),  \textbf{\emph{y}}=(y_0, y_1, \cdots,y_{n-1})\in \mathbb F_2^n$ is defined as
$$\langle \textbf{\emph{x}}, \textbf{\emph{y}}\rangle=\sum_{i=0}^{n-1}x_iy_i.$$
\begin{lemma}\cite{ref23}
    Let $\mathcal C$ be a linear code over $\mathbb F_2$, then $\mathcal C$ is an LCD code if and only if\, $\forall c_1\in\mathcal C\backslash\{\bold 0\}, \exists c_2\in \mathcal C, \langle c_1, c_2 \rangle \neq0$ holds. 
\end{lemma}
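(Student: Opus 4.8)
The plan is to prove the two implications separately, in each case arguing by contraposition and unwinding the definition $\mathcal C^\perp = \{c' \in \mathbb F_2^n \mid \langle c, c'\rangle = 0 \text{ for all } c \in \mathcal C\}$ recorded in the preliminaries. The single structural observation driving both directions is that the negation of the stated condition reads ``there exists a nonzero $c_1 \in \mathcal C$ with $\langle c_1, c_2\rangle = 0$ for all $c_2 \in \mathcal C$,'' and that such a $c_1$ is precisely a nonzero element of $\mathcal C \cap \mathcal C^\perp$. Once this dictionary is in place, the equivalence with the set-theoretic LCD condition $\mathcal C \cap \mathcal C^\perp = \{0\}$ is immediate.

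For the forward direction I would assume $\mathcal C$ is LCD, so $\mathcal C \cap \mathcal C^\perp = \{0\}$, and fix an arbitrary nonzero $c_1 \in \mathcal C$. If no $c_2 \in \mathcal C$ satisfied $\langle c_1, c_2\rangle \neq 0$, then $c_1$ would be orthogonal to every codeword, whence $c_1 \in \mathcal C^\perp$; combined with $c_1 \in \mathcal C$ this forces $c_1 \in \mathcal C \cap \mathcal C^\perp = \{0\}$, contradicting $c_1 \neq 0$. Hence the required $c_2$ must exist.

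For the reverse direction I would assume the codeword-level condition and show $\mathcal C \cap \mathcal C^\perp = \{0\}$. Suppose, for contradiction, that some nonzero $c_1$ lies in $\mathcal C \cap \mathcal C^\perp$. Being in $\mathcal C^\perp$, the vector $c_1$ is orthogonal to every element of $\mathcal C$, so no $c_2 \in \mathcal C$ yields $\langle c_1, c_2\rangle \neq 0$; this contradicts the hypothesis applied to this $c_1 \in \mathcal C \setminus \{0\}$. Therefore the intersection is trivial and $\mathcal C$ is LCD.

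I do not expect any genuine analytic obstacle: the content is entirely definitional. The only point demanding care is the correct handling of the nested quantifiers when passing to the contrapositive, specifically the recognition that ``$c_1$ is orthogonal to all of $\mathcal C$'' is synonymous with ``$c_1 \in \mathcal C^\perp$.'' This is exactly the step that translates the codeword-level statement into the intersection condition, and it is the hinge on which both implications turn.
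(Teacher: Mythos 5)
Your proof is correct and is the standard definitional argument: the negation of the codeword-level condition is exactly the existence of a nonzero element of $\mathcal C \cap \mathcal C^\perp$, which is the hinge you correctly identify. The paper itself states this lemma as a citation from the reference on quasi-cyclic LCD codes and gives no proof, so there is nothing to compare against; your two-implication contrapositive argument is exactly what that omitted proof would be.
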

\begin{lemma}\label{Theorem5. 1}
    Let $\mathcal C$ be a double circulant code generated by $(1, f(x))$ over $\mathbb F_2$, then the sufficient and necessary condition for $\mathcal C$ to be an Euclidean LCD code is $$\mathrm{gcd}(1+f(x)\overline{f(x)}, x^m-1)=1.$$
\end{lemma}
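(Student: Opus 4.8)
The plan is to compute the intersection $\mathcal C\cap\mathcal C^\perp$ explicitly in polynomial form and then translate the LCD condition $\mathcal C\cap\mathcal C^\perp=\{\boldsymbol 0\}$ into a statement about units in the ring $R=\mathbb F_2[x]/(x^m-1)$. First I would recall the two polynomial descriptions already available: by definition $\mathcal C=\{(r(x),r(x)f(x))\mid r(x)\in R\}$, and by Lemma \ref{lemma99} the dual is generated by $(\overline{f(x)},1)$, so that $\mathcal C^\perp=\{(s(x)\overline{f(x)},s(x))\mid s(x)\in R\}$.

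Next I would intersect the two families. A codeword lying in both must satisfy $(r,rf)=(s\overline f,s)$ for some $r,s\in R$; comparing coordinates forces $s=rf$ and $r=s\overline f=rf\overline f$, hence $r\bigl(1+f\overline f\bigr)=0$ in $R$ (recall $-1=1$ over $\mathbb F_2$). Conversely, any $r$ with $r(1+f\overline f)=0$ produces the codeword $(r,rf)$, which lies in $\mathcal C^\perp$ upon taking $s=rf$, since then $s\overline f=rf\overline f=r$. Because $(r,rf)=\boldsymbol 0$ exactly when $r=0$, this establishes that $\mathcal C\cap\mathcal C^\perp\neq\{\boldsymbol 0\}$ precisely when the element $g(x):=1+f(x)\overline{f(x)}$ admits a nonzero annihilator in $R$.

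Finally I would reduce the annihilator condition to the stated gcd condition. The key observation is that $R$ is a finite commutative ring, so an element is a zero divisor or zero exactly when it fails to be a unit: if multiplication by $g$ is injective on the finite set $R$ it is also surjective, forcing $g$ to be invertible, and the contrapositive gives a nonzero annihilator whenever $g$ is a non-unit. Furthermore, $g(x)$ is a unit in $R$ iff $\gcd\bigl(g(x),x^m-1\bigr)=1$ by B\'ezout's identity in $\mathbb F_2[x]$. Chaining these equivalences, $g$ possesses a nonzero annihilator iff $\gcd(1+f\overline f,x^m-1)\neq1$, so $\mathcal C$ is an Euclidean LCD code iff this gcd equals $1$. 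The only step demanding genuine care is the finite-ring argument linking ``non-unit'' to ``zero divisor''; the coordinate comparison and the B\'ezout characterization of units are routine.
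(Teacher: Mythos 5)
Your proof is correct, but it takes a genuinely different route from the paper: the paper's entire proof of this lemma is a one-line citation of Theorem~4.4 in the reference of Guan et al.\ on one-generator quasi-cyclic LCD codes, of which double circulant codes are the index-2 special case. You instead give a direct, self-contained argument: you use the polynomial parametrizations $\mathcal C=R(1,f)$ and $\mathcal C^\perp=R(\overline f,1)$ (the latter already established in the proof of Lemma~\ref{lemma99} without any self-duality hypothesis, and equal to all of $\mathcal C^\perp$ by a dimension count), identify $\mathcal C\cap\mathcal C^\perp$ with the annihilator of $g=1+f\overline f$ in $R$, and convert ``nonzero annihilator'' to ``non-unit'' via the injective-implies-surjective argument in the finite commutative ring $R$, and then to the gcd condition via B\'ezout. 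All three steps are sound, and the gcd is indeed well defined on residues modulo $x^m-1$. What the paper's approach buys is brevity and placement inside the general quasi-cyclic framework; what yours buys is independence from the external reference and an explicit mechanism --- the hull is nontrivial exactly when $1+f\overline f$ shares an irreducible factor with $x^m-1$, which is more informative than the bare equivalence.
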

    
\begin{proof}
    The statement can be directly proven by Theorem 4.4 in \cite{ref23}.
\end{proof}
\begin{lemma}\label{lemma5.3}
    Let $f(x)\in R$. If $\mathrm{gcd}(f(x)\overline{f(x)}+1,x^m-1)=1$, then $\mathrm{wt}(f(x))$ is even.
\end{lemma}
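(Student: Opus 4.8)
The plan is to reduce the parity of $\mathrm{wt}(f(x))$ to the single value $f(1)$ and then read that value off from the coprimality hypothesis. First I would record that over $\mathbb F_2$ one has $\mathrm{wt}(f(x))\equiv f(1)\ (\mathrm{mod}\ 2)$, since $f(1)=\sum_{i=0}^{m-1}f_i$ counts the nonzero coefficients modulo $2$. Hence it suffices to prove $f(1)=0$ in $\mathbb F_2$. Because $1$ is a root of $x^m-1$, evaluation at $x=1$ descends to a well-defined ring homomorphism $R\to\mathbb F_2$, so this reduction is legitimate inside the quotient ring, and the gcd with $x^m-1$ is unaffected by which representative of $f(x)\overline{f(x)}$ modulo $x^m-1$ we evaluate.

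Next I would exploit the factor $x-1$ of $x^m-1$. Since $x-1$ is a linear, hence irreducible, divisor of $x^m-1$, the hypothesis $\mathrm{gcd}(f(x)\overline{f(x)}+1,\ x^m-1)=1$ forces $x-1\nmid f(x)\overline{f(x)}+1$. Equivalently, evaluating $f(x)\overline{f(x)}+1$ at $x=1$ must give a nonzero element of $\mathbb F_2$, i.e. it equals $1$, so that $\big(f(x)\overline{f(x)}\big)\big|_{x=1}=0$.

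The last step is to identify the bar polynomial at $x=1$. From the definition $\overline{f(x)}=f_0+f_{m-1}x+\cdots+f_1x^{m-1}$, setting $x=1$ merely reorders the summands, so $\overline{f(x)}\big|_{x=1}=\sum_{i=0}^{m-1}f_i=f(1)$. Therefore $\big(f(x)\overline{f(x)}\big)\big|_{x=1}=f(1)\cdot f(1)=f(1)^2=f(1)$, where the final equality uses the identity $a^2=a$ in $\mathbb F_2$. Combining this with $\big(f(x)\overline{f(x)}\big)\big|_{x=1}=0$ yields $f(1)=0$, and hence $\mathrm{wt}(f(x))$ is even.

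There is no serious obstacle in this argument: the only way $x^m-1$ can fail to be coprime to $f(x)\overline{f(x)}+1$ at a guaranteed root is at $x=1$, and coprimality rules that out. The one point demanding care is the collapse of the quadratic condition to a linear one, namely that $\overline{f(x)}\big|_{x=1}=f(1)$ together with the characteristic-$2$ identity $f(1)^2=f(1)$ turns $f(1)\overline{f(x)}\big|_{x=1}=0$ into $f(1)=0$; without the field identity $a^2=a$ the implication would not be immediate.
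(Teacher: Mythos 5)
Your proof is correct, but it takes a genuinely different route from the paper's. The paper argues at the level of the code: it invokes Lemma \ref{Theorem5. 1} to conclude that the DC code $\mathcal C$ generated by $(1,f(x))$ is LCD, then observes that if $\mathrm{wt}(f(x))$ were odd, every row of the generator matrix $(I_m,A)$ would have even weight, hence every codeword would, and the sum of all rows --- which works out to the all-ones vector of length $2m$ --- would lie in $\mathcal C\cap\mathcal C^{\perp}$, contradicting the LCD property. Your argument bypasses the code entirely and works purely in the polynomial ring: the factor $x-1$ of $x^m-1$ together with coprimality forces $\bigl(f(x)\overline{f(x)}+1\bigr)\big|_{x=1}\neq 0$, and the identities $\overline{f}(1)=f(1)$ and $a^2=a$ in $\mathbb F_2$ collapse this to $f(1)=0$, i.e.\ even weight. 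All the steps check out, including the points you flag as delicate: evaluation at $x=1$ is well defined on $R$ because $1$ is a root of $x^m-1$, and the gcd is insensitive to which representative of the product modulo $x^m-1$ is used. What each approach buys: yours is shorter, self-contained, and does not depend on the LCD characterization at all (indeed it proves the slightly cleaner fact that $x+1\nmid f\overline{f}+1$ already forces even weight); the paper's version keeps the reasoning inside the coding-theoretic framework of the section and reuses machinery (the all-ones-vector obstruction) that reappears in Lemma \ref{lemma5.5} and the subsequent theorems, so it reads more uniformly with what follows.
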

\begin{proof}
    According to Lemma \ref{Theorem5. 1}, the DC code $\mathcal C$ generated by $(1,f(x))$ is LCD. If $\mathrm{wt}(f(x))$ is odd, all codewords in $\mathcal C$ have even weight. Then the vector obtained by summing all rows of the generator matrix of $\mathcal C$ is $(1,1,\cdots,1)$ which is orthogonal to each codeword in $\mathcal C$. This contradicts to the fact that $\mathcal C$ is LCD. Therefore, $\mathrm{wt}(f(x))$ is even.
\end{proof}
Due to the structural connection between DC codes and bordered DC codes, the conditions for a DC code to be LCD can be used to derive related conclusions for bordered DC codes.
\begin{lemma}\label{lemma5.5}
    Let $\mathcal C$ be a $[2m+2,m+1]$ bordered double circulant code associated with $f(x)$, where $\alpha=0$ in its generator matrix. If $\mathcal C$ is LCD,  then the following hold:
    \\(1) m is odd if and only if $\mathrm{wt}(f(x))$ is odd. 
    \\(2) m is even if and only if $\mathrm{wt}(f(x))$ is even. 
\end{lemma}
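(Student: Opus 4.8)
The plan is to prove the contrapositive, after first noticing that the two assertions are really one. Under the standing hypothesis that $\mathcal C$ is LCD, part (1) reads $m$ odd $\iff \mathrm{wt}(f(x))$ odd and part (2) reads $m$ even $\iff \mathrm{wt}(f(x))$ even; negating both sides of either equivalence produces the other, so it suffices to show that an LCD bordered DC code forces $m\equiv\mathrm{wt}(f(x))\,(\mathrm{mod}\,2)$. I would establish the contrapositive: if $m$ and $\mathrm{wt}(f(x))$ have opposite parity, then $\mathcal C$ is not LCD, which — since $\mathcal C$ is LCD exactly when $\mathcal C\cap\mathcal C^\perp=\{\mathbf 0\}$ — amounts to exhibiting a nonzero vector lying in $\mathcal C\cap\mathcal C^\perp$. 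Throughout I write $G'=(I_{m+1},A')$ with $\alpha=0$, label its rows $g_0=(1,0,\dots,0\mid 0,1,\dots,1)$ (the border row) and $g_i=(e_i\mid 1,\ i\text{-th row of }A)$ for $1\le i\le m$, and use the standard fact $\langle\mathbf 1,c\rangle\equiv\mathrm{wt}(c)\,(\mathrm{mod}\,2)$, the fact that each cyclic row of $A$ has weight $\mathrm{wt}(f(x))$, and the fact that each column of $A$ sums to $\mathrm{wt}(f(x))$.

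For the case $m$ odd and $\mathrm{wt}(f(x))$ even, I would take the all-ones vector $\mathbf 1$, mirroring the argument already used in Lemma \ref{lemma5.3}. The weight of $g_0$ is $1+m$ and the weight of each $g_i$ is $2+\mathrm{wt}(f(x))$, both even under the parity assumption, so every codeword has even weight and hence $\mathbf 1\in\mathcal C^\perp$. Summing all $m+1$ rows produces $\mathbf 1$ in the left block, while in the right block the border column sums to $m\equiv 1$ and each remaining column sums to $1+\mathrm{wt}(f(x))\equiv 1$; thus the total is $\mathbf 1$, giving $\mathbf 1\in\mathcal C$. Therefore $\mathbf 1\in\bigl(\mathcal C\cap\mathcal C^\perp\bigr)\setminus\{\mathbf 0\}$.

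For the case $m$ even and $\mathrm{wt}(f(x))$ odd, the all-ones argument breaks down (the rows now all have odd weight), so I would instead use $c=\sum_{i=1}^{m}g_i$, the sum of the $m$ non-border rows. Its left block is $(0,1,\dots,1)$, its border coordinate is $m\equiv 0$, and its circulant block is $\mathrm{wt}(f(x))\cdot\mathbf 1\equiv\mathbf 1$; in particular $c\neq\mathbf 0$. The key point is $c\in\mathcal C^\perp$: pairing $c$ with $g_0$ gives $0$ from the left block and $m\equiv 0$ from the circulant block, while pairing $c$ with $g_i$ for $1\le i\le m$ gives $1$ from the left block (the $i$-th left coordinate of $c$ equals $1$) and $\mathrm{wt}(f(x))\equiv 1$ from the circulant block, totalling $1+1=0$. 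Hence $c\in\bigl(\mathcal C\cap\mathcal C^\perp\bigr)\setminus\{\mathbf 0\}$, and again $\mathcal C$ is not LCD.

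In both cases $\mathcal C$ fails to be LCD, which completes the contrapositive and hence the lemma. The only genuinely delicate point is the bookkeeping in the second case: one must keep the border row, the border column, and the circulant block separate and invoke the hypotheses $m$ even and $\mathrm{wt}(f(x))$ odd in exactly the right places so that the two contributions to each inner product cancel. Everything else reuses the even-weight/all-ones reasoning already established for plain DC codes in Lemma \ref{lemma5.3}.
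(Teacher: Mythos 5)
Your proof is correct and follows essentially the same route as the paper: in each mixed-parity case you exhibit an explicit nonzero codeword lying in $\mathcal C\cap\mathcal C^\perp$ by summing rows of the generator matrix (the paper uses the sum of the last $m$ rows in both cases and identifies it with rows of the parity check matrix $(A'^T, I_{m+1})$, whereas you use the all-ones vector in the first case and verify orthogonality to the generators directly in the second --- a cosmetic difference). Your parity bookkeeping for the border row, border column, and circulant block all checks out.
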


\begin{proof}
    When $m$ is odd, if $\mathrm{wt}(f(x))$ is even, then the vector obtained by summing the last $m$ rows of the generator matrix is precisely the first row of the parity check matrix. Actually, this means that $\mathcal C\cap \mathcal C^{\perp} \neq \{0\}$. Therefore, if $m$ is odd, $\mathrm{wt}(f(x))$ is also odd. 
    
    When $\mathrm{wt}(f(x))$ is odd, if $m$ is even, then the vector obtained by summing the last $m$ rows of the generator matrix is precisely the vector obtained by summing the last $m$ rows of the parity check matrix, which means that $\mathcal C\cap \mathcal C^{\perp} \neq \{0\}$. Therefore, if $\mathrm{wt}(f(x))$ is odd, $m$ is also odd.
    Similarly, we can also conclude that $m$ is even if and only if $\mathrm{wt}(f(x))$ is even. 
\end{proof}

\begin{theorem}
    Let $f(x)\in R$. If $\mathrm{gcd}(f(x)\overline{f(x)}+1,x^m-1)=1$, then the bordered DC code associated with  $\widehat{f}(x)$, where $\alpha=0$ in its generator matrix over $\mathbb F_{2}$ is LCD.
\end{theorem}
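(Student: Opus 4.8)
\section*{Proof proposal}

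The plan is to use the standard criterion that a binary linear code with generator matrix $G$ is LCD if and only if $GG^T$ is nonsingular over $\mathbb F_2$. Writing the generator matrix of the bordered DC code associated with $\widehat f(x)$ (with $\alpha=0$) as $G'=(I_{m+1},A')$, we have $G'G'^T=I_{m+1}+A'A'^T$, so the problem reduces to showing that $I_{m+1}+A'A'^T$ is invertible. Throughout set $g(x)=f(x)\overline{f(x)}$ and $J(x)=1+x+\cdots+x^{m-1}$, and record two consequences of the hypothesis: first, by Lemma \ref{lemma5.3} the weight $\mathrm{wt}(f(x))$ is even, equivalently $f(1)=0$ and hence $g(1)=0$; second, the circulant matrix $N:=I_m+AA^T$, being the circulant matrix of $1+g(x)$, is invertible precisely because $\gcd(1+g(x),x^m-1)=1$.

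First I would compute $A'A'^T$ in block form. Since $\widehat A=E_m+A$ over $\mathbb F_2$, a direct expansion gives $\widehat A\widehat A^T=(E_m+A)(E_m+A^T)=mE_m+AA^T$, the two cross terms $\mathrm{wt}(f)\,E_m$ cancelling modulo $2$. Combining this with the contributions of the border (the corner entry $\alpha^2+\mathbf{1}\mathbf{1}^T=m$, the first-row and first-column blocks $\mathrm{wt}(\widehat f)\,\mathbf{1}\equiv m\,\mathbf{1}$, and the lower-right all-ones block $\mathbf{1}^T\mathbf{1}=E_m$) one obtains
$$I_{m+1}+A'A'^T=\begin{pmatrix} 1+m & m\,\mathbf{1}\\ m\,\mathbf{1}^T & I_m+(1+m)E_m+AA^T\end{pmatrix}\pmod 2,$$
where $\mathbf{1}$ is the all-ones row vector of length $m$. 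The two parities of $m$ behave differently, so I would treat them separately.

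For $m$ odd the matrix collapses to $\begin{pmatrix} 0 & \mathbf{1}\\ \mathbf{1}^T & N\end{pmatrix}$ with $N$ invertible, so by the Schur complement it is nonsingular if and only if the scalar $\mathbf{1}\,N^{-1}\mathbf{1}^T$ is nonzero. The key observation is that $\mathbf{1}^T$ is an eigenvector of the circulant matrix $N$ with eigenvalue $(1+g)(1)=1+g(1)=1$; hence $N\mathbf{1}^T=\mathbf{1}^T$, so $N^{-1}\mathbf{1}^T=\mathbf{1}^T$ and $\mathbf{1}\,N^{-1}\mathbf{1}^T=\mathbf{1}\mathbf{1}^T=m=1$. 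Thus $I_{m+1}+A'A'^T$ is invertible and the code is LCD.

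For $m$ even the off-diagonal blocks vanish and the corner entry is $1$, so the matrix is block diagonal and invertibility amounts to the invertibility of $I_m+E_m+AA^T$, the circulant matrix of $1+J(x)+g(x)$; equivalently $\gcd(1+J(x)+g(x),x^m-1)=1$. The main obstacle lies exactly here, since for even $m$ the polynomial $x^m-1$ is not squarefree, so a root-by-root check is unavailable and one must argue modulo each irreducible factor. The point I would establish is that when $m=2^am'$ with $m'$ odd and $a\ge 1$, every irreducible factor $\pi(x)$ of $x^m-1=(x^{m'}-1)^{2^a}$ occurs with multiplicity $2^a\ge 2$, and $J(x)=(x^m-1)/(x+1)$ therefore retains $\pi(x)$ as a factor; hence $\pi(x)\mid J(x)$ for every such $\pi$. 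Consequently $1+J+g\equiv 1+g\pmod{\pi(x)}$ for every $\pi$, and since $\gcd(1+g,x^m-1)=1$ forces $1+g\not\equiv 0\pmod{\pi}$, we get $1+J+g\not\equiv 0\pmod{\pi}$ for all $\pi$. Therefore $\gcd(1+J+g,x^m-1)=1$, the circulant matrix is invertible, and again the code is LCD, completing the proof.
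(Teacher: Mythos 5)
Your proof is correct, but it takes a genuinely different route from the paper. The paper argues at the codeword level: it row- and column-reduces the generator matrix of $\mathcal C_b$ to a form $G_b'$ built from $(I_m,A)$ plus a border, invokes the characterization that a code is LCD if and only if every nonzero codeword fails to be orthogonal to the whole code, and then exhibits an explicit non-orthogonal partner for each codeword in four cases split by the parity of $m$ and the parity of the number of rows combined with the first row. You instead use Massey's criterion that a binary code with generator matrix $G'=(I_{m+1},A')$ is LCD if and only if $I_{m+1}+A'A'^T$ is nonsingular, compute that matrix in block form (your block computation is right: the cross terms involving $E_m$ vanish because $\mathrm{wt}(f)$ is even by Lemma \ref{lemma5.3}), and reduce everything to the invertibility of circulants. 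For $m$ odd the Schur complement $\mathbf{1}N^{-1}\mathbf{1}^T=1$ follows from $N\mathbf{1}^T=(1+f(1)^2)\mathbf{1}^T=\mathbf{1}^T$; for $m$ even your observation that every irreducible factor of $x^m-1=(x^{m'}-1)^{2^a}$ divides $J(x)=(x^m-1)/(x-1)$, so that $1+J+g\equiv 1+g$ modulo each such factor, is a clean self-contained argument that has no counterpart in the paper. What your approach buys is that the hypothesis $\gcd(1+f(x)\overline{f(x)},x^m-1)=1$ enters in exactly one place (invertibility of the circulant of $1+f\overline{f}$), and every step is a verifiable matrix or polynomial identity; what the paper's approach buys is that it avoids Massey's determinant criterion and the Schur-complement machinery, staying within the codeword-level LCD characterization it already imported from \cite{ref23}, and its case analysis transfers almost verbatim to the $\alpha=1$ situation handled in the subsequent theorem. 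The only presentational point is that you should cite Massey's nonsingularity criterion explicitly, since the paper never states it as a lemma, though it is in the cited reference \cite{ref17}.
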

\begin{proof}
    Let $G=(I_m,A)$ be the generator matrix of the DC code $\mathcal C$, where $A$ is the circulant matrix corresponds to $f(x)$. According to Lemma \ref{Theorem5. 1}, $\mathcal C$ is LCD. Let $$G_b=
    \begin{pmatrix}
          &  &  &0&1&1&\cdots &1\\
          &  &  &1&\widehat{f}_0&\widehat{f}_1&\cdots &\widehat{f}_{m-1} \\
          &I_{m+1}&  &1&\widehat{f}_{m-1}&\widehat{f}_0&\cdots&\widehat{f}_{m-2}\\
          & &  &\vdots&\vdots&\vdots&\ddots& \\
          &  &  &1&\widehat{f}_1&\widehat{f}_2&\cdots &\widehat{f}_0\\
    \end{pmatrix}$$ be the generator matrix of the bordered DC code $\mathcal C_b$ which corresponds to $\widehat{f}(x)$, where $\widehat{f_i}=f_i+1$. Perform elementary row and column operations on \( G_b \): add the first row to each of the remaining rows, move the \((m+2)\)-th column to the second column, and shift the remaining columns one position to the right in sequence resulting in the matrix in the following form:$$G_b'=
    \begin{pmatrix}
          1&0&0&\cdots&0&1&1&\cdots&1\\
          1&1& & & &f_0&f_1&\cdots &f_{m-1}\\
          1&1& &I_m& &f_{m-1}&f_0&\cdots&f_{m-2}\\
          \vdots&\vdots& & & &\vdots&\vdots&\ddots&\vdots \\
          1&1& & & &f_1&f_2&\cdots &f_0\\
    \end{pmatrix}.$$
Let the code generated by \( G_b' \) be denoted as \( \mathcal C_b' \). Clearly, \( \mathcal C_b \) is equivalent to \( \mathcal C_b' \). Since \( \mathcal C \) is an LCD code, the codewords generated by all linear combinations of the last \( m-1 \) rows of \( G_b' \) also form an LCD code. Therefore, it suffices to consider whether, for a codeword \( c\in \mathcal C_b' \) formed by a linear combination of the first row of \( G_b' \) and the remaining rows, there exists a \( c'\in \mathcal C_b' \) such that the inner product of \( c \) and \( c' \) is nonzero. Since \( \mathcal C \) is an LCD code. By Lemma \ref{lemma5.3}, \( \mathrm{wt}(f(x)) \) is even. Therefore, we only need to consider the following four cases:  
    
    (1) \( m \) is odd, and \( c_1 \), the codeword formed by a linear combination of the first row of \( G_b' \) and an odd number of rows from the last \( m-1 \) rows of \( G_b' \), is considered. $c_1=(0,1,\boldsymbol{u},\boldsymbol{v})$, where $\boldsymbol{u},\boldsymbol{v}\in F_2^m$ and both $\mathrm{wt}(\boldsymbol{u})$ and $\mathrm{wt}(\boldsymbol{v})$ are odd. In this case, the inner product of $c_1$ and the first row of $G_b'$ is 1. Therefore, \( c_1 \) is not in the dual of \( \mathcal C_b' \).
    
    (2) \( m \) is odd, and \( c_2 \), the codeword formed by a linear combination of the first row of \( G_b' \) and an even number of rows from the last \( m-1 \) rows of \( G_b' \), is considered. $c_2=(1,0,\boldsymbol{u},\boldsymbol{v})$, where $\mathrm{wt}(\boldsymbol{u})$ is even, $\mathrm{wt}(\boldsymbol{v})$ is odd. In this case, The inner product of \( c_2 \) with the vector obtained by summing all the rows of \( G_b' \) is 1. Therefore, \( c_2 \) is not in the dual of \( \mathcal C_b' \).
    
    (3) \( m \) is even, and \( c_3 \), the codeword formed by a linear combination of the first row of \( G_b' \) and an odd number of rows from the last \( m-1 \) rows of \( G_b' \), is considered. $c_3=(0,1,\boldsymbol{u},\boldsymbol{v})$, where $\mathrm{wt}(\boldsymbol{u})$ is odd, $\mathrm{wt}(\boldsymbol{v})$ is even. In this case, The inner product of \( c_3 \) with the vector obtained by summing all the rows of \( G_b' \) is 1. Therefore, \( c_3 \) is not in the dual of \( \mathcal C_b' \).
    
    (4) \( m \) is even, and \( c_4 \), the codeword formed by a linear combination of the first row of \( G_b' \) and an even number of rows from the last \( m-1 \) rows of \( G_b' \), is considered. $c_4=(1,0,\boldsymbol{u},\boldsymbol{v})$, where both $\mathrm{wt}(\boldsymbol{u})$ and $\mathrm{wt}(\boldsymbol{v})$ are even. In this case, the inner product of $c_4$ and the first row of $G_b'$ is 1. Therefore, \( c_4 \) is not in the dual of \( \mathcal C_b' \).
    
    In conclusion, \( \mathcal C_b' \) is an LCD code, and since \( \mathcal C_b \) is equivalent to \( \mathcal C_b' \), \( \mathcal C_b \) is also an LCD code.
\end{proof}

\begin{theorem}
    Let \( \mathcal C_f \) be the \([2m+2, m+1]\) bordered DC code associated with \( f(x) \) over $\mathbb F_2$, where \(\alpha = 1\) in the generator matrix. If \( m \) is even, \( \mathcal C \) cannot be an LCD code. If \( m \) is odd and $\mathrm{gcd}(f(x)\overline{f(x)}+1,x^m-1)=1$, then \( \mathcal C_f \) and \( \mathcal C_{\widehat{f}} \) are both LCD codes.
\end{theorem}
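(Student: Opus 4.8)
The plan is to reduce both claims to the standard criterion that a binary linear code with full row rank generator matrix $G'$ is LCD if and only if $G'G'^{T}$ is nonsingular over $\mathbb F_2$. This follows from the codeword criterion already quoted (and from \cite{ref17}): a codeword $c=yG'$ lies in $\mathcal C\cap\mathcal C^{\perp}$ iff $yG'G'^{T}=0$, and since $G'$ contains $I_{m+1}$ we have $yG'=0$ only for $y=0$. Writing $G'=(I_{m+1},A')$, where $A'$ has first row and first column all $1$'s (as $\alpha=1$) and lower-right $m\times m$ block the circulant $A$, block multiplication gives
$$G'G'^{T}=I_{m+1}+A'A'^{T}=\begin{pmatrix} m & (1+w)\mathbf 1\\ (1+w)\mathbf 1^{T} & I_m+E_m+AA^{T}\end{pmatrix},$$
where $w\equiv\mathrm{wt}(f(x))\pmod 2$, $\mathbf 1$ is the length-$m$ all-ones (row) vector, $E_m$ the all-ones matrix, and I have used $\mathbf 1\mathbf 1^{T}=m$, $\mathbf 1^{T}\mathbf 1=E_m$ and that every row and column sum of the circulant $A$ equals $w$, so that $AA^{T}\mathbf 1=w\mathbf 1$.

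For $m$ even the top-left entry is $m\equiv0$, and a direct check gives $G'G'^{T}\mathbf 1_{m+1}=\mathbf 0$: the top coordinate equals $mw\equiv0$ and each bottom coordinate equals $(1+w)+1+m+w\equiv m\equiv0$. Hence $G'G'^{T}$ is singular for every $f(x)$, so $\mathcal C_f$ is never LCD when $m$ is even; equivalently, at codeword level the sum of all rows of $G'$ is a nonzero codeword orthogonal to every row, in the spirit of Lemma \ref{lemma5.5}.

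For $m$ odd with $\gcd(f(x)\overline{f(x)}+1,x^m-1)=1$, Lemma \ref{lemma5.3} forces $\mathrm{wt}(f(x))$ even, so $w=0$ and the top-left entry is $m\equiv1$. I would then take the Schur complement of the invertible $(1,1)$ block, obtaining $(I_m+E_m+AA^{T})-\mathbf 1^{T}\mathbf 1=I_m+AA^{T}$; under the circulant--polynomial correspondence this circulant represents $1+f(x)\overline{f(x)}$, hence is nonsingular precisely by the gcd hypothesis (cf.\ Lemma \ref{Theorem5. 1}), so $G'G'^{T}$ is nonsingular and $\mathcal C_f$ is LCD. For $\mathcal C_{\widehat f}$ I first record the identity $\widehat f\,\overline{\widehat f}=f\overline{f}+s$ with $s=1+x+\cdots+x^{m-1}$: indeed $\widehat f=f+s$, $\overline s=s$, and $sg\equiv(\mathrm{wt}(g)\bmod 2)s$, so the cross term $s(f+\overline f)$ vanishes ($f,\overline f$ have equal weight) while $s^{2}=ms=s$ as $m$ is odd. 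Now $\mathrm{wt}(\widehat f)=m-\mathrm{wt}(f)$ is odd, so $1+w=0$ and $G'G'^{T}$ is block-diagonal $\mathrm{diag}(1,\,I_m+E_m+\widehat A\widehat A^{T})$; its lower block represents $1+s+\widehat f\,\overline{\widehat f}=1+f\overline f$, again nonsingular by hypothesis, so $\mathcal C_{\widehat f}$ is LCD.

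The main obstacle is the $\mathbb F_2$ bookkeeping in the border blocks: one must track the parities of $m$ and of $\mathrm{wt}(f)$ simultaneously and verify the reduction $\widehat f\,\overline{\widehat f}=f\overline f+s$, which is exactly what makes the all-ones contribution $E_m$ cancel the surplus $s$ and return the relevant Schur or diagonal block to the original LCD quantity $1+f\overline f$. I expect the $\widehat f$ branch to be the delicate one, since its validity hinges on the border column decoupling (the parity relation $1+w=0$, i.e.\ $\mathrm{wt}(\widehat f)$ odd) rather than on the non-bordered gcd condition transferring directly to $\widehat f$.
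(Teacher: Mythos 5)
Your proof is correct, and it takes a genuinely different route from the paper's. You reduce everything to Massey's criterion that $\mathcal C$ is LCD iff $G'G'^{T}$ is nonsingular, compute the Gram matrix in block form, and dispose of the three claims by (i) exhibiting $\mathbf 1_{m+1}$ in the kernel when $m$ is even, (ii) a Schur-complement reduction to the circulant $I_m+AA^{T}$ representing $1+f\overline f$ when $m$ is odd, and (iii) the polynomial identity $\widehat f\,\overline{\widehat f}=f\overline f+s$ for the complement code; I checked the block arithmetic ($\mathbf 1A^{T}=w\mathbf 1$, $AA^{T}\mathbf 1^{T}=w\mathbf 1^{T}$, $s^2=ms$, $\overline s=s$) and it is all right. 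The paper instead argues at the codeword level: for $m$ even it splits on the parity of $\mathrm{wt}(f)$ and exhibits an explicit nonzero vector in $\mathcal C\cap\mathcal C^{\perp}$ in each subcase; for $m$ odd it runs a case analysis over linear combinations of rows of $G_f$, producing for each type of codeword a witness with nonzero inner product, and then obtains the statement for $\widehat f$ by showing that $\mathcal C_{\widehat f}$ is permutation-equivalent to $\mathcal C_f$ via row additions and a column swap. Your approach is more uniform and actually more complete at the points where the paper is thin: the paper's case (1) for $m$ odd assumes $\mathrm{wt}(f)$ odd, which Lemma \ref{lemma5.3} rules out under the gcd hypothesis (so that case is vacuous), and its case (2) is only sketched, whereas your $w=0$ computation covers exactly the case that matters. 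What your route loses is the structural byproduct of the paper's argument, namely that $\mathcal C_f$ and $\mathcal C_{\widehat f}$ are permutation-equivalent when $\alpha=1$ and $m$ is odd, which is stronger than merely concluding that both are LCD; your identity $\widehat f\,\overline{\widehat f}=f\overline f+s$ recovers the LCD conclusion but hides that equivalence.
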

\begin{proof}
    When \(\alpha = 1\) and \(m\) is even, if \(\mathrm{wt}(f(x))\) is odd, then the codeword generated by the linear combination of all rows of \(G\) is $$(\underbrace{1,1,\cdots,1}_{m+2},\underbrace{0,0,\cdots,0}_{m}).$$This is identical to the first row of the parity check matrix, indicating that \(\mathcal C \cap \mathcal C^\perp \neq \{0\}\).
    If \(\mathrm{wt}(f(x))\) is even, then all rows of the generator matrix of \( \mathcal C \) have even weight. According to Lemma \ref{lemma15}, all codewords in \( \mathcal C \) also have even weight. By summing all rows of the generator matrix, we obtain the codeword $$(\underbrace{1, 1,\cdots, 1}_{2m+2}).$$This codeword is orthogonal to all codewords in \( \mathcal C \), indicating that \( \mathcal C \cap \mathcal C^\perp \neq \{0\} \). Therefore, when \( m \) is even, \( \mathcal C \) cannot be an LCD code.

    When \(\alpha = 1\) and \(m\) is odd, the generator matrix of the bordered DC code $\mathcal C_f$ associated with \(f(x)\) has the following form:  
$$G_f=
    \begin{pmatrix}
          1&  &  & & &1&1&1&\cdots &1\\
          &1& & & &1&f_0&f_1&\cdots &f_{m-1} \\
          & &1& & &1&f_{m-1}&f_0&\cdots&f_{m-2}\\
          & &  &\ddots& &\vdots&\vdots&\vdots&\ddots&\vdots \\
          &  &  & &1&1&f_1&f_2&\cdots&f_0\\
    \end{pmatrix}.$$We will discuss the two cases where \(\mathrm{wt}(f(x))\) is odd and \(\mathrm{wt}(f(x))\) is even.

    (1) When \(\mathrm{wt}(f(x))\) is odd, we first consider all codewords generated by the last \(m-1\) rows of \(G_f\). 
    
    The codewords formed by an odd number of these rows have the form \((0, \boldsymbol{u}, 1, \boldsymbol{v})\), where both \(\mathrm{wt}(\boldsymbol{u})\) and \(\mathrm{wt}(\boldsymbol{v})\) are odd. The codeword generated by summing the last \(m-1\) rows of \(G\) is $$(0,\underbrace{1, 1, \cdots, 1}_{2m+1}).$$ The inner product of this codeword with \((0,\boldsymbol{u}, 1, \boldsymbol{v})\) is 1. Thus, any codeword formed by an odd combination of these rows does not belong to \(\mathcal C_f^{\perp}\). 
    
    For codewords formed by an even linear combination of these rows, the resulting codeword has the form \((0, \boldsymbol{u}, 0, \boldsymbol{v})\). Since $\mathrm{gcd}(f(x)\overline{f(x)}+1,x^m-1)=1$, according to Lemma \ref{Theorem5. 1}, there must exist a codeword \((\beta, \boldsymbol{u}', \gamma, \boldsymbol{v}') \in \mathcal C_f\) such that the inner product of \((0, \boldsymbol{u}, 0, \boldsymbol{v})\) with \((\beta, \boldsymbol{u}', \gamma, \boldsymbol{v}')\) equals 1, where $\beta,\gamma \in \{0,1\}$. Therefore, any codeword formed by an even combination of these rows also does not belong to \(\mathcal C_f^{\perp}\).
    
    Next, we consider all codewords formed by a linear combination of the first row and the remaining rows, the resulting codewords have the form $$(1, \boldsymbol{u}_1, 0, \boldsymbol{v}_1)\ \text{or}\ (1, \boldsymbol{u}_2, 1, \boldsymbol{v}_2),$$ where $\mathrm{wt}(\boldsymbol{u}_1)$ and $\mathrm{wt}(\boldsymbol{v}_2)$ are odd, \(\mathrm{wt}(\boldsymbol{v}_1)\) and \(\mathrm{wt}(\boldsymbol{u}_2)\) are even. The inner product of $(1, \boldsymbol{u}_1, 0, \boldsymbol{v}_1)$ with \((0, 1, 1, \dots, 1)\) equals 1. The inner product of $(1, \boldsymbol{u}_2, 1, \boldsymbol{v}_2)$ with the first row of $G_f$ equals to 1. Therefore, any codeword formed by a linear combination of the first row and the remaining rows does not belong to \(\mathcal C_f^{\perp}\). 

    (2) When \(\mathrm{wt}(f(x))\) is even, it can still be proven that for any codeword \(c_1 \in \mathcal C_f\), there exists a codeword \(c_2 \in \mathcal C_f\) such that the inner product of \(c_1\) and \(c_2\) is nonzero. The proof follows a similar process to (1).

In conclusion, \(\mathcal C_f \cap \mathcal C_f^\perp = \{0\}\).

Perform elementary row and column operations on \( G_f \): add the first row to each of the remaining rows, and swap the \((m+2)\)-th column with the first column, resulting in \( G_f' \).
$$G_f'=
    \begin{pmatrix}
          1&  &  & & &1&1&1&\cdots &1\\
          &1& & & &1&\widehat{f}_0&\widehat{f}_1&\cdots &\widehat{f}_{m-1} \\
          & &1& & &1&\widehat{f}_{m-1}&\widehat{f}_0&\cdots&\widehat{f}_{m-2}\\
          & &  &\ddots& &\vdots&\vdots&\vdots&\ddots&\vdots \\
          &  &  & &1&1&\widehat{f}_1&\widehat{f}_2&\cdots&\widehat{f}_0\\
    \end{pmatrix}.$$
    The code generated by \( G_f' \) is the bordered DC code corresponding to \( \widehat{f}(x) \). Since \( G_f' \) is equivalent to \( G_f \), the bordered DC code corresponding to \( \widehat{f}(x) \) is also an LCD code.
\end{proof}

\section{Conclusion}
In this paper, we study DC codes in the form of polynomials. We propose the conditions for DC codes and bordered DC codes over $\mathbb F_2$ to be self-dual and extremal. Using the method presented in the theorem, we can find all the extremal DC codes with length up to 20 and some cases with the length between 22 and 44. Furthermore, the sufficient conditions for a bordered DC code to be an LCD code under Euclidean inner product are presented. A further direction to consider is the simple conditions for double circulant codes to be extremal when the code length is longer or the field is larger.

\nocite{ref26}
\bibliographystyle{cas-model2-names}
\bibliography{main}

\end{document}